\documentclass[aps,pre,amsmath,amsfonts,amssymb,superscriptaddress,showpacs,10pt]{revtex4}
\usepackage{graphics}
\usepackage[latin1]{inputenc}
\usepackage{amscd}
\usepackage{soul}

\usepackage[pdftex]{graphicx}

\usepackage{rotating}
\usepackage[pdftex,hypertexnames=true,linktocpage=true]{hyperref}

\usepackage{color}

\usepackage{amsthm}
\usepackage{framed}
\usepackage{amssymb}
\usepackage{amsmath,mathrsfs}
\usepackage{hhline}
\usepackage{ulem}
\usepackage{fancyhdr}
\usepackage{tikz}
\DeclareGraphicsRule{*}{mps}{*}{}

\everymath{\displaystyle}
\everymath{\displaystyle}

\everymath{\displaystyle}
\newcommand{\beq}{\begin{equation}}
\newcommand{\eeq}{\end{equation}}

\newcommand{\bi}{\begin{itemize}}
\newcommand{\ei}{\end{itemize}}

\newcommand{\adj}{{\rm adj}}
\newcommand{\approach}{\ensuremath{\rightarrow}}
\renewcommand{\H}{\mathcal{H}}
\newcommand{\R}{\mathbb{R}}
\newcommand{\imply}{\ensuremath{\Rightarrow}}
\newcommand{\mapto}{\ensuremath{\rightarrow}}

\newcommand{\la}{\langle}
\newcommand{\ra}{\rangle}

\def\RR{\mathbb{R}}

\def\tr{{\rm{tr}}}

\newtheorem{definition}{Definition}[section]

\newtheorem{theorem}{Theorem}[section]    
\newtheorem{lemma}{Lemma}[section]   
\newtheorem{pro}{Proposition}[section]
    
\newtheorem{remark}{Remark}[section]

\newtheorem{corollary}{Corollary}
\newcommand{\trace}{{\rm tr}}

\begin{document}
\title{The volume of Gaussian states by information geometry}

\author{Domenico Felice}
\email{domenico.felice@unicam.it}
\affiliation{School of Science and Technology,
University of Camerino, I-62032 Camerino, Italy \\
INFN-Sezione di Perugia, Via A. Pascoli, I-06123 Perugia, Italy}
\author{H\`a Quang Minh}
\email{minh.haquang@iit.it}
\affiliation{Istituto Italiano di Tecnologia\\ 
Via Morego 30, I-16163 Genova , Italy}
\author{Stefano Mancini}
\email{stefano.mancini@unicam.it}
\affiliation{School of Science and Technology,
University of Camerino, I-62032 Camerino, Italy \\
INFN-Sezione di Perugia, Via A. Pascoli, I-06123 Perugia, Italy}

\begin{abstract}
We formulate the problem of determining the volume of the set of Gaussian physical states in the framework of information geometry. This is done by considering phase space probability distributions parametrized by their covariances and endowing the resulting statistical manifold with the Fisher-Rao metric. We then evaluate the volume of classical, quantum and quantum entangled states for two-mode systems, showing chains of strict inclusions.
\end{abstract}

\pacs{02.50.Cw (Probability theory), 02.40.Ky (Riemannian geometries), 03.65.Ta (Foundations of quantum mechanics)}

\maketitle

\section{Introduction}

\noindent States of physical systems in classical and quantum mechanics are represented by very different mathematical structures, nevertheless analogies appear at certain points of comparison \cite{Mukunda}. Classical states are depicted as probability density functions (pdf$s$) in phase space, whereas quantum states are described by density operators defined on Hilbert spaces \cite{Blum}. In fact, the notion of phase space is often gotten rid of in quantum mechanics because of the non-commutativity of canonical variables. Nevertheless, phase space can be considered as common playground for both classical and quantum states when one employs for the latter a description in terms of the so-called quasi-probability distribution functions, such as the Wigner function \cite{Kim}.
Then, one can address the computation of the volume of different classes of states in the phase space framework. The issue of the volume of sets of states is of uppermost importance. It can help in distinguishing classical from quantum states as well as to find separable states within all quantum states. Separable states are the states of a composite system that can be written as convex combinations of subsystem states, in contrast to entangled states \cite{Horo}. Determining the volume of physical states is also relevant for defining ``typical'' properties of a set of states. In fact, to this end, one usually resorts to the random generation of states according to a suitable measure stemming from the volume of states \cite{Lupo}.

Describing the geometric properties of sets of states is intimately connected with the evaluation of their volumes. The sets of classical and quantum states are both convex sets. In finite dimensional systems, several metrics are introduced in order to compute the volume of physical states. Due to their own nature as pdf$s$, classical states can be distinguished by the well-known Fisher-Rao metric \cite{Naga}. Quantum analogue can be found in the setting of pure states, where the Fubini-Study metric turns out to be proportional to the Fisher-Rao metric \cite{Marmo1}. However, for quantum mixed states there is no single metric \cite{Cafaro}. Several measures have been analysed, each of them arising from different physical motivations and advantages \cite{GQ}. Such different measures have been proposed on the set of density matrices acting on a finite-dimensional Hilbert space; a very natural one employed the Positive Partial Transpose (PPT) criterion \cite{Horo} to determine an upper bound for the volume of separable quantum states and figuring out that, for any composite quantum system, it is different from zero regardless of the number of subsystems it contains and its (finite) dimension \cite{Peres}. Other important measures include the Hilbert-Schmidt measure, the Bures measure and the measure induced by partial trace on composite systems. All of them use techniques from geometric functional analysis and convex geometry to estimate the volume of separable quantum states \cite{Zycz,Aubrun,Szarek}. Finally, a generalization of the Hilbert-Schmidt measure and the volume induced by the partial trace on composite systems, the so-called $\alpha$-volume, showed that the PPT criterion is not precise for large dimensions of the Hilbert space \cite{Ye}. This is an evidence that with increasing Hilbert space dimensions, the procedure to test the separability becomes more and more difficult to implement. Additionally, when going to infinite dimensional systems (often referred to as `` continuous variable''  -CV- systems), problems also arise from the non-compactness of the support of states.

Thus, on the one hand, we have the difficulties in analysing infinite dimensional systems, while on the other hand we still lack a unifying approach for evaluating volumes of classical and quantum states. To deal with these problems, we propose to exploit Information Geometry. This is the application of differential geometric techniques to the study of families of probabilities \cite{Amari}. {As such, it can be applied to Gaussian states, be they either classical or quantum. Indeed, Gaussian classical states are pdf$s$ in phase space and the same is true for Gaussian quantum states, which are pdf$s$ coming from Wigner functions in phase space \cite{Mann}. The main reason for the focus on Gaussian states is that they are ubiquitous in physics, mathematics and information theory (see e.g. \cite{Mukunda}).}

Very recently, a method based on the extension of the Hilbert-Schmidt measure has been proposed \cite{Strunz} to evaluate the volume of Gaussian quantum states, which is not applicable however to the classical states. In the present work, we exploit methods of Information Geometry in order to associate a Riemannian manifold to a generic Gaussian system. In such a way, we consider a volume measure as the volume of the manifold associated to a set of states of the system. More specifically, we start by considering $N$ identical and indistinguishable particles, i.e. bosonic modes characterized by their positions and momenta and we assume that a Gaussian pdf with zero mean value describes the whole system state. Such a pdf is characterized by a set of parameters, i.e. the entries of the covariance matrix (depending on their values we can have various classes of states). Then, thanks to these parameters, to each class of states is associated a statistical model which turns out to be a Riemannian manifold endowed with the well-known Fisher-Rao metric (see also \cite{FMP}). {We are able to overcome the difficulty of an unbounded volume by introducing a regularizing function stemming from energy bounds, which acts as a form of compactification of the support of Gaussian states. We then proceed to consider a different regularizing function which satisfies some nice properties of canonical invariance. Finally, we find the volumes of classical, quantum, and quantum entangled states for two-mode Gaussian systems, showing chains of strict inclusions.}

The layout of the paper is as follows. In Sec. \ref{sec2} we recall the phase space representation of both classical and quantum states. Then, in Sec. \ref{sec3} we present a volume measure for Gaussian states based on information geometry. Sec. \ref{sec4} is devoted to the regularization of the introduced volume measure.  Applications to bipartite states of two mode systems are discussed in Sec. \ref{sec5}. Finally we draw our conclusions in Sec. \ref{sec6}.


\section{Phase space representation of states}\label{sec2}

The phase space $\Gamma$ of $N$ identical and indistinguishable particles (i.e. bosonic modes) is the $2N$-dimensional space of allowed real values for the canonical position and momentum variables $\xi=(q_1,p_1,\ldots,q_N,p_N)^T$ of such modes (by $T$ we denote the transpose). 

A classical state for such a system of $N$ modes is represented by a pdf in $\Gamma$, namely
\begin{eqnarray}\label{pdf}
\rho: \Gamma \to \mathbb{R}_+, \qquad \int_\Gamma d\xi\ \rho(\xi)=1.
\end{eqnarray}
As a particular case, when $\rho$ becomes a Dirac delta $\delta^{2N}(\xi-\xi_0)$ we have a pure state whose values of position and momentum variables are (deterministically) given by $\xi_0$. Throughout the paper we will consider $\Gamma=\RR ^{2N}$ and the integration is performed on  $\RR ^{2N}$  when not otherwise specified.

The probability density function in \eqref{pdf} can be considered as originating from the characteristic function 
$\chi_\rho(\tau)$, through the Fourier transform,
\begin{equation}
\label{charactfunctclassic}
\rho(\xi)=\int d\tau\ e^{-i\xi^T \tau} \chi_\rho(\tau),
\end{equation}
where $i$ is the imaginary unit and $\tau\in\RR ^{2N}$.

The set of all (mixed and pure) states is a convex set, that is if $\rho_j(\xi)$ for $j =1,2,\ldots$ represent states 
and $\left\{P_j\right\}_j$ is a probability vector, then
$$
\rho(\xi)=\sum_j  P_j \rho_j(\xi),
$$
is still a possible (mixed) state. 
Only pure states cannot be decomposed in a non trivial manner as convex sum of
other states, so they are the extremal points (or extremal elements) in the space of all states.

\bigskip

The quantum analogue of pdf is the density operator $\hat\rho$ defined on the Hilbert space ${\cal H}=L^2(\RR)^{\otimes N}$ associated to the $N$-mode system. The canonical position and momentum variables become operators $\hat{q}_k,\hat{p}_k,\ k=1,\ldots,N$ on ${\cal H}$ with the commutation relation $\left[\hat{q}_k,\hat{p}_k\right]=i$. Setting $\hat{R}_{2k-1}:=\hat{q}_k$ and $\hat{R}_{2k}:=\hat{p}_k$ these relations are summarized as $\left[\hat{R}_k,\hat{R}_l\right]=i\ \Omega_{kl}$, where $\Omega_{kl}$ is the $kl$ entry of the antisymmetric $2N\times 2N$ matrix 
\begin{equation}\label{J}
\Omega=\bigoplus_{j=1}^N\left(\begin{array}{cc}
0&1\\
-1&0\\
\end{array}\right).
\end{equation}
This induces a symplectic structure on the phase space $\Gamma$, meaning that a bilinear form $\omega:\Gamma\times \Gamma\rightarrow\RR$ exists, $\omega$ being non-degenerate and skew-symmetric.

A phase space representation of the state $\hat\rho$ can be given by means of the Wigner function 
defined as in \cite{Kim}
\begin{equation}\label{wigner}
W(\xi):=\left(\frac{1}{\pi}\right)^N\int\ e^{2i\sum_{k=1}^N p_ky_k}\rho(q_1+y_1,q_1-y_1,\ldots,q_N+y_N,q_N-y_N)dy_1\ldots dy_N.
\end{equation}
Here, $\rho(q_1+y_1,q_1-y_1,\ldots,q_N+y_N,q_N-y_N)$ is the position representation of the density operator, i.e. the representation of the operator $\hat\rho$ on the eigenvectors of the operators $\hat{q}_k$,  $k\in\{1,\ldots,N\}$. In such a way, the Wigner function turns out to be defined over the $2N$-dimensional phase space $\Gamma$ and would be the analogous of classical pdf$s$ $\rho(\xi)$. 
Nevertheless, the Wigner function is not a pdf because it can also assume negative values. Hence, it is often called a \textit{quasi-probability distribution function}. 

Yet, as a proper pdf, the Wigner function can be considered as originating from the characteristic function
$\chi_{\hat\rho}(\tau)$ through the Fourier transform,
\begin{equation}\label{charactfunctquantum}
W(\xi)=\int d\tau \ e^{-i\xi^T\tau}\ \chi_{\hat\rho}(\tau),
\end{equation}
where
\begin{equation}\label{characteristic}
\chi_{\hat\rho}(\xi):=\tr\left[\hat{\rho} \hat{D}(\xi)\right],
\end{equation}
and
\begin{equation*}
\hat{D}(\xi):=\exp\left[ i\sum_k\ \left(q_k \hat{q}_k + p_k \hat{p}_k \right)\right].
\end{equation*}


\subsection{Gaussian States}

Gaussian states are those for which the characteristic function is a Gaussian function of the phase space coordinates $\xi$, namely
\begin{equation}
\label{Gaussiancharct1}
\chi_\rho(\xi)=e^{-\frac{1}{4}\xi^T V\xi-i\mu^T\xi},
\end{equation}
or 
\begin{equation}
\label{Gaussiancharct2}
\chi_{\hat\rho}(\xi)=e^{-\frac{1}{4}\xi^T V\xi-i\mu^T\xi},
\end{equation}
where $V$ is the 2$N\times 2 N$ covariance matrix and $\mu\in\RR ^{2N}$ the first moment vector
(recall that a Gaussian state is completely determined by $V$ and $\mu$). 

Although formally identical, Eqs. \eqref{Gaussiancharct1} and \eqref{Gaussiancharct2} differ by the conditions imposed on the covariance matrix.
In fact, for classical states $V$ is symmetric and strictly positive definite, i.e. $V>0$.
Yet, not all symmetric, positive definite matrices correspond to the covariance matrices of quantum physical states. 
In fact, due to the non-commutativity of canonical operators we have \cite{uncertainty,Simon94}
\begin{theorem}\label{QSth}
A \textit{real}, symmetric $2 N\times 2N$ matrix $V>0$ describes a Gaussian quantum state if and only if\begin{equation}
\label{QS}
V+i\Omega\geq 0.
\end{equation}
\end{theorem}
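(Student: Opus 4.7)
The plan is to establish the biconditional by proving necessity via a direct positivity computation and sufficiency via Williamson's normal form together with a thermal-state construction. Throughout, one can assume without loss of generality that the first moment vector $\mu$ vanishes, since shifting $\hat\rho$ by a Weyl displacement leaves the covariance invariant and preserves positivity.

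For the necessity direction ($V$ is the covariance of a quantum Gaussian state $\Rightarrow V+i\Omega\ge 0$), I would leverage the positivity of the density operator. Pick an arbitrary complex vector $c=(c_1,\ldots,c_{2N})\in\CC^{2N}$ and form the non-self-adjoint operator $\hat X=\sum_{k}c_k\hat R_k$. Since $\hat\rho\ge 0$, we have $\tr[\hat\rho\,\hat X^\dagger\hat X]\ge 0$. Now expand using the identity
\begin{equation*}
\hat R_k\hat R_l=\tfrac{1}{2}\{\hat R_k,\hat R_l\}+\tfrac{i}{2}\Omega_{kl},
\end{equation*}
and recall that for a Gaussian state with characteristic function of the form \eqref{Gaussiancharct2}, the covariance matrix reproduces the symmetrized second moments, so that $\tr[\hat\rho\,\{\hat R_k,\hat R_l\}]=V_{kl}$. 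Substituting gives $\tr[\hat\rho\,\hat X^\dagger\hat X]=\tfrac{1}{2}c^\dagger(V+i\Omega)c\ge 0$, which, since $c$ is arbitrary, is precisely the Hermitian-matrix inequality $V+i\Omega\ge 0$.

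For the sufficiency direction ($V+i\Omega\ge 0\Rightarrow$ a valid Gaussian quantum state exists with covariance $V$), I would invoke Williamson's theorem: any real symmetric positive-definite $2N\times 2N$ matrix can be written as $V=S^{T}DS$, where $S$ is symplectic ($S^{T}\Omega S=\Omega$) and $D=\mathrm{diag}(\nu_1,\nu_1,\ldots,\nu_N,\nu_N)$ with $\nu_j>0$ the symplectic eigenvalues. Using $S^T\Omega S=\Omega$ one obtains $V+i\Omega=S^{T}(D+i\Omega)S$, so the hypothesis becomes $D+i\Omega\ge 0$; a direct $2\times 2$ block computation shows this is equivalent to $\nu_j\ge 1$ for every $j$. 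Then for each mode take the thermal state with mean photon number $\bar n_j=(\nu_j-1)/2$; its single-mode characteristic function is Gaussian with covariance $\nu_j I_2$, and the $N$-fold tensor product is a legitimate Gaussian quantum state with covariance $D$. Finally, apply the unitary metaplectic representative $\hat U_S$ of $S^{-1}$; conjugation by $\hat U_S$ realizes the symplectic change of coordinates at the level of the canonical operators, producing a valid Gaussian quantum state with the prescribed covariance $V$.

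The main obstacle is the sufficiency direction, because it is not a one-line positivity argument but relies on two nontrivial inputs: Williamson's simultaneous symplectic diagonalization and the existence of a unitary (metaplectic) implementation of every linear symplectic transformation on the bosonic Hilbert space $L^{2}(\RR)^{\otimes N}$. By contrast, the necessity part reduces to the elementary observation that $\tr[\hat\rho\,\hat X^\dagger\hat X]$ is non-negative, combined with the canonical commutation relations encoded by $\Omega$.
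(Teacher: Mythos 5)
The paper does not actually prove Theorem \ref{QSth}: it imports it from the cited literature (Holevo; Simon, Mukunda and Dutta), so there is no internal proof to compare against. Your argument is correct and is essentially the standard proof found in those references. The necessity direction is sound: with the paper's convention $\chi_{\hat\rho}(\xi)=e^{-\frac{1}{4}\xi^T V\xi}$ one indeed has $V_{kl}=\tr[\hat\rho\,\{\hat R_k,\hat R_l\}]$, and together with $\hat R_k\hat R_l=\tfrac{1}{2}\{\hat R_k,\hat R_l\}+\tfrac{i}{2}\Omega_{kl}$ the positivity $\tr[\hat\rho\,\hat X^\dagger\hat X]\ge 0$ gives $\tfrac{1}{2}c^\dagger(V+i\Omega)c\ge 0$ exactly as you claim; it is worth noting (as the paper's Remark following the theorem hints) that this direction holds for \emph{any} state with finite second moments, Gaussianity being needed only for sufficiency. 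The sufficiency direction is also correct: $V+i\Omega=S^T(D+i\Omega)S$ reduces the condition to $\nu_j\ge 1$, the thermal construction with $\bar n_j=(\nu_j-1)/2\ge 0$ produces a legitimate state with covariance $D$ (the vacuum $\nu_j=1$ giving $V=I$ is consistent with the normalization), and the metaplectic representation transports $D$ to $V$. The only points to tighten are bookkeeping: fix once and for all whether $\hat U_S^\dagger\hat R\hat U_S=S\hat R$ or $S^{-1}\hat R$ so that conjugation really yields $S^TDS$ rather than $(S^{-1})^TDS^{-1}$, and state explicitly that $S$ real and invertible is what lets you pass from $V+i\Omega\ge 0$ to $D+i\Omega\ge 0$ and back. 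Neither affects the validity of the proof.
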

{Relation \eqref{QS} is equivalent to Schr\"odinger's formulation of Heisenberg's uncertainty principle \cite{Narco90}. Furthermore, the {\textit{if and only if}} in Theorem \ref{QSth} is a peculiarity of Gaussian states as can be seen e.g. by  Hardy's formulation of Heisenberg's uncertainty principle \cite{deGosson07,Mukunda95}. {In any case},} it is trivial to show that { Eq. \eqref{QS}} implies the positive definiteness of the matrix $V$, { whereas} the converse is not true.  

The Gaussian form of the characteristic functions \eqref{Gaussiancharct1} and \eqref{Gaussiancharct2} reflects on the corresponding phase space representations $\rho(\xi)$ and $W(\xi)$ by Eqs.\eqref{charactfunctclassic} and \eqref{charactfunctquantum}, which we can commonly write as
\begin{eqnarray}\label{PxT}
P(\xi)=\frac{e^{-\frac{1}{2}\xi^T V^{-1}\xi}}{(2\pi)^N\sqrt{ \det V}}.
\end{eqnarray}
Here we set $\mu=0$ since the first moments are irrelevant for most of the physical properties of Gaussian states.
Notice that the Wigner function, being in such a case a Gaussian function, is a true pdf.

\bigskip

Among quantum states we can also distinguish between \textit{separable} and \textit{entangled} states \cite{Horo}.
To this end, it would be helpful to employ the partial transposition. It follows from the definition of the Wigner function \eqref{wigner} that on phase space, transposition corresponds to the transformation that changes the sign to all $p$s coordinates and leaves the $q$s unchanged
\begin{equation}
\label{transposition}
\left(q_1,p_1,\ldots,q_N,p_N\right)\mapsto\Lambda\left(q_1,p_1,\ldots,q_N,p_N\right):=\left(q_1,-p_1,\ldots,q_N,-p_N\right).
\end{equation}
Consider now a composite Gaussian system with two subsystems $A$ and $B$; let $V$ be the covariance matrix describing the whole system, and $V_A$ and $V_B$ be the ones describing subsystems $A$ and $B$, respectively. Denote by $\Lambda_A:=\Lambda\oplus {\rm id}$ (resp. $\Lambda_B= {\rm id}\oplus\Lambda$) the positive partial transposition in the $A$'s system only (resp. $B$'s system only). Then a necessary and sufficient condition for the separability of the system is given by the following theorem \cite{Werner}
\begin{theorem}\label{separabilityth}
A Gaussian state described by the covariance matrix $V$ is separable if and only if there exist covariance matrices $V_A$ and $V_B$ such that
\begin{equation}\label{separability}
V\geq V_A\oplus V_B.
\end{equation}
\end{theorem}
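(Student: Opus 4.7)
My plan is to prove the two directions separately. For sufficiency, I would explicitly construct a separable Gaussian state whose covariance equals $V$; for necessity, I would exploit the fact that the covariance matrix of a convex mixture dominates the weighted average of the covariance matrices of its components.

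For the sufficiency direction, suppose $V_A$, $V_B$ are valid quantum covariance matrices (each obeying Theorem \ref{QSth}) with $\Delta := V - (V_A \oplus V_B) \geq 0$. I would take zero-mean Gaussian quantum states $\hat\rho_A$, $\hat\rho_B$ on subsystems $A$ and $B$ with covariances $V_A$, $V_B$, and form
\[
\hat\rho \;=\; \int_{\RR^{2N}} P_\Delta(d)\, \hat D_A(d_A)\hat\rho_A \hat D_A^\dagger(d_A) \otimes \hat D_B(d_B)\hat\rho_B \hat D_B^\dagger(d_B)\, dd ,
\]
where $d = (d_A,d_B)$ and $P_\Delta$ is a classical Gaussian pdf on $\RR^{2N}$ with zero mean and covariance $\Delta$. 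This expression is manifestly a convex combination of product states, hence separable. Since the Weyl displacements shift first moments but leave covariances invariant, the resulting Gaussian state has covariance $(V_A\oplus V_B) + \Delta = V$, as required.

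For the necessity direction, assume $\hat\rho$ is separable, so $\hat\rho = \sum_\lambda P(\lambda) \hat\rho_A^\lambda \otimes \hat\rho_B^\lambda$ with first moments $m^\lambda = \mu_A^\lambda \oplus \mu_B^\lambda$ and block-diagonal covariances $V_A^\lambda \oplus V_B^\lambda$. Expanding $V$ via the law of total covariance yields
\[
V \;=\; \sum_\lambda P(\lambda)\bigl(V_A^\lambda \oplus V_B^\lambda\bigr) \;+\; \sum_\lambda P(\lambda)\,(m^\lambda - \bar m)(m^\lambda - \bar m)^T ,
\]
with $\bar m = \sum_\lambda P(\lambda) m^\lambda$. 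The second term is positive semidefinite, so defining $V_A := \sum_\lambda P(\lambda) V_A^\lambda$ and $V_B := \sum_\lambda P(\lambda) V_B^\lambda$ gives $V \geq V_A \oplus V_B$; both averages still satisfy the quantum constraint $V + i\Omega \geq 0$ because that inequality is convex in $V$.

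The main obstacle I anticipate is making the sufficiency construction fully rigorous when $\Delta$ is only positive semidefinite rather than strictly positive, since $P_\Delta$ then becomes a degenerate distribution supported on $\mathrm{range}(\Delta)$; this is handled cleanly by the regularization $\Delta \mapsto \Delta + \epsilon I$ followed by taking the weak limit $\epsilon \to 0^+$. A secondary subtlety in the necessity direction is that the components $\hat\rho_{A,B}^\lambda$ need not themselves be Gaussian, but this is harmless because every quantum state admits a covariance matrix obeying $V + i\Omega \geq 0$ and this convex constraint survives the averaging performed above.
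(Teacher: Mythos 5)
The paper offers no proof of this statement: it is quoted directly from Werner and Wolf \cite{Werner}, so there is no internal argument to compare against. Your proposal is, in substance, a correct reconstruction of the standard proof from that reference. The necessity direction via the law of total covariance is sound: for a product state the covariance relative to its own mean is block diagonal, the cross term $\sum_\lambda P(\lambda)(m^\lambda-\bar m)(m^\lambda-\bar m)^T$ is positive semidefinite, and the constraint $V+i\Omega\geq 0$ is preserved under convex averaging, so the averaged $V_A$, $V_B$ are bona fide covariance matrices. The sufficiency direction via Gaussian-weighted displacements of a product Gaussian state is also the standard construction: at the level of characteristic functions the classical Gaussian average multiplies $\chi$ by $e^{-\frac{1}{4}\xi^T\Delta\xi}$, yielding a Gaussian state with covariance $(V_A\oplus V_B)+\Delta=V$ that is manifestly a convex mixture of product states. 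Your two flagged subtleties (degenerate $\Delta$, non-Gaussian components in the decomposition) are the right ones and are handled adequately; the only further point worth making explicit is that in the necessity direction the decomposition may be a continuous integral rather than a discrete sum, which changes nothing in the argument.
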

{Unfortunately this theorem is not easy to verify in practice since doing so requires looking for covariance matrices $V_A$ and $V_B$ satisfying \eqref{separability}.} 

Nevertheless, if we consider a two-mode Gaussian system ($N=2$), the criterion to distinguish separable from entangled states simplifies into \cite{Simon}
\begin{theorem}\label{sepcor}
The $4\times 4$ symmetric matrix $V$ satisfying the condition \eqref{QS} describes a separable state if and only if
\begin{equation}\label{separable}
\widetilde{V}+i\Omega\geq 0,
\end{equation}
where $\widetilde{V}=\Lambda_B V\Lambda_B$, with 
\begin{equation}
\label{PPT2}
\Lambda_B(q_1,p_1,q_2,p_2)=(q_1,p_1,q_2,-p_2).
\end{equation}
\end{theorem}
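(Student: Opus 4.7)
The plan is to establish the two implications separately, with necessity being a short consequence of Theorem \ref{separabilityth} and sufficiency requiring a reduction to Simon's local symplectic normal form. For the \emph{necessity} direction, assume $V$ is separable, so by Theorem \ref{separabilityth} there exist single-mode covariance matrices $V_A,V_B$ satisfying the respective quantum conditions such that $V\geq V_A\oplus V_B$. Conjugating this inequality by the involution $\Lambda_B=\mathrm{id}\oplus\Lambda$ gives $\widetilde{V}\geq V_A\oplus(\Lambda V_B\Lambda)$. A direct computation on $2\times 2$ matrices yields $\Lambda\,\omega\,\Lambda=-\omega$, where $\omega$ is the one-mode symplectic form; combined with $V_B+i\omega\geq 0$ (first pass to the complex conjugate $V_B-i\omega\geq 0$, then conjugate by $\Lambda$) this gives $\Lambda V_B\Lambda+i\omega\geq 0$, so $\Lambda V_B\Lambda$ is again a valid one-mode covariance matrix. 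Hence $V_A\oplus(\Lambda V_B\Lambda)+i\Omega\geq 0$, and together with $\widetilde{V}\geq V_A\oplus(\Lambda V_B\Lambda)$ this forces $\widetilde{V}+i\Omega\geq 0$.

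For the \emph{sufficiency} direction, assume $V+i\Omega\geq 0$ and $\widetilde{V}+i\Omega\geq 0$. The key tool is the action of the local symplectic group $Sp(2,\RR)\times Sp(2,\RR)$ on $V$ by $V\mapsto SVS^T$ with $S=S_A\oplus S_B$: this action preserves (i) the quantum condition, (ii) the PPT condition (since $\Lambda S_B\Lambda$ remains symplectic when $S_B$ is, as one checks directly from $\Lambda\omega\Lambda=-\omega$), and (iii) the property of being separable in the sense of Theorem \ref{separabilityth}. By applying Williamson normal form to each $2\times 2$ diagonal block of $V$ and a suitable local symplectic rotation to the off-diagonal block, one brings $V$ to Simon's normal form in which the diagonal blocks are scalar ($a\,\mathrm{Id}_2$ and $b\,\mathrm{Id}_2$) and the off-diagonal block is $\mathrm{diag}(c_1,c_2)$; this reduces the problem to the four real invariants $a,b,c_1,c_2$. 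Expressing the two positivity conditions through principal minors in this normal form yields two explicit polynomial inequalities that differ only by the sign of $c_1c_2$, which is the invariant partial transposition flips.

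The main obstacle is to construct, from the PPT polynomial inequality, explicit single-mode covariance matrices $V_A,V_B$ satisfying both $V_A+i\omega\geq 0$, $V_B+i\omega\geq 0$ and $V\geq V_A\oplus V_B$. A natural strategy is to look for witnesses within a simple two-parameter family such as $V_A=\alpha\,\mathrm{Id}_2$, $V_B=\beta\,\mathrm{Id}_2$, use Schur complements to reduce $V-V_A\oplus V_B\geq 0$ to a bivariate inequality in $(\alpha,\beta)$, and then show that it admits a solution with $\alpha,\beta\geq\tfrac{1}{2}$ precisely when the PPT inequality holds. The delicate algebra here is essentially Simon's argument in \cite{Simon}, which I would follow rather than redo from scratch; the essential point is that the same construction breaks down by exactly one sign if one starts only from $V+i\Omega\geq 0$, which is what singles out the extra PPT condition as the separability criterion in the two-mode case.
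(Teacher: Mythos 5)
First, a point of reference: the paper does not prove Theorem \ref{sepcor} at all --- it is imported from Simon \cite{Simon} (with the general characterization, Theorem \ref{separabilityth}, taken from Werner and Wolf \cite{Werner}) --- so there is no in-paper argument to compare yours against and your proposal must stand on its own. Your necessity direction does: conjugating $V\geq V_A\oplus V_B$ by $\Lambda_B$, using $\Lambda\omega\Lambda=-\omega$ together with the reality of $V_B$ to get $\Lambda V_B\Lambda+i\omega\geq 0$, and then adding $i\Omega$ is the standard argument and is complete.

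The sufficiency direction, however, contains a genuine gap and a step that would fail. The gap is that the entire content of the hard direction --- manufacturing $V_A,V_B$ from the PPT inequality --- is explicitly deferred to \cite{Simon} rather than carried out, so nothing is actually proved. The failing step is the proposed witness ansatz $V_A=\alpha\,\mathrm{Id}_2$, $V_B=\beta\,\mathrm{Id}_2$ (note also that in the paper's normalization \eqref{QS} the single-mode constraint is $\alpha,\beta\geq 1$, not $\tfrac{1}{2}$). Such witnesses do not exist even for manifestly separable states once $V$ is in Simon's normal form. Take $V=\mathrm{Id}_4+Y$ with $Y=y\,vv^T$, $v=(1,0,1,0)^T$, $y>0$: this is separable since $V\geq \mathrm{Id}_2\oplus\mathrm{Id}_2$. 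Bringing it to normal form by local squeezings gives $a=b=\sqrt{1+y}$, $c_1=y/\sqrt{1+y}$, $c_2=0$, and the Schur-complement condition for $V\geq \alpha\,\mathrm{Id}_2\oplus\beta\,\mathrm{Id}_2$ requires $(a-\alpha)(b-\beta)\geq c_1^2$; at the best admissible choice $\alpha=\beta=1$ this reads $\left(\sqrt{1+y}-1\right)^2\geq \left(\sqrt{1+y}-1\right)^2\left(\sqrt{1+y}+1\right)^2/(1+y)$, which is false for every $y>0$. The actual constructions of Simon and of Werner--Wolf produce witnesses of the form $S_AS_A^T$ and $S_BS_B^T$ for suitable local symplectics, i.e.\ general pure-state covariance matrices rather than multiples of the identity. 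So the sufficiency half needs either the witness family enlarged accordingly or the argument of \cite{Simon} genuinely reproduced; as written it does not go through.
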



\section{A volume measure based on information geometry}\label{sec3}

\noindent Let us consider the family ${\cal S}$ of Gaussian pdf with zero mean in the $2N$-dimensional phase space $\Gamma$. Each such pdf takes on the form $P(\xi)$ of Eq. \eqref{PxT} and may be parametrized using $m\le N(2N+1)$ \textit{real}-valued variables $\theta^1,\ldots\theta^m$ (the nonzero entries of the covariance matrix) so that 
\begin{equation}\label{statmodel}
{\cal S}:=\left\{P(\xi)\equiv P(\xi;\theta)=\frac{e^{-\frac{1}{2}\xi^T V^{-1}(\theta)\xi}}{(2\pi)^N\sqrt{\det V(\theta)}},\ \Big| \ \theta\in\Theta \right\},
\end{equation}
where $\Theta$ is a subset of $\RR ^m$ obtained by requiring some specific constraints on $V(\theta)$, and the mapping $\theta \mapsto P(.;\theta)$ is injective. In such a way, ${\cal S}$ turns out to be an $m$-dimensional statistical model (in fact a Gaussian statistical model). The parametrization is provided by the entries of the covariance matrix $V=\left[V_{\mu\nu}\right]_{\mu\nu}$, by defining $\theta^l=V_{\mu\nu}$, with $l=\sum_{r=0}^{\mu-2}(2N - r) +\nu-\mu + 1$ and $1\leq l\leq m$ (the summation over $r$ disappears when $\mu=1$). 

{Given the statistical model ${\cal S}$ of Eq. \eqref{statmodel}, the mapping $\varphi:{\cal S}\rightarrow \RR^m$ defined by $\varphi(P(.;\theta)) = \theta$ is injective and it allows us to consider $\varphi=\left[\theta^l\right]$ as a coordinate system for ${\cal S}$. In addition, we assume that a change of coordinates $\psi:\Theta\rightarrow \psi(\Theta)\subset\RR^m$ is such that the set $\{P(.;{\psi^{-1}(\kappa)})\;\arrowvert\;\kappa\in\psi(\Theta)\}$, where 
$\kappa$ is the set of new coordinates given by $\kappa^l:=\psi(\theta^l)$, represents the same family of probability functions as ${\cal S}=\{P(.;\theta)\ \arrowvert\ \theta \in\Theta\}$. Moreover, we also assume that such a change of coordinates is differentiable. Thereby, ${\cal S}$ can be considered as a $C^\infty$ differentiable manifold, called \textit{statistical manifold} \cite{Amari}.}

\begin{remark}
\label{remprobdisj}
From here on, we assume that $\theta^{l+1}=0$ with $l=\sum_{r=0}^{\mu-2}(2N-r)+1$, for $\mu=2k+1$. This implies that there is no correlation between position $q_k$ and momentum $p_k$ of the $k$th mode, for all $k=1,\ldots,N$.
\end{remark}

Consider now a point $\theta\in\Theta$; then, the \textit{Fisher information matrix} of ${\cal S}$ at $ \theta$ is the $m\times m$ matrix $g(\theta)$ whose entries are given by \cite{Amari}
\begin{equation}
\label{gFR}
g_{\mu\nu}(\theta):=\int_{\RR^{2N}}dx\,P(\xi;\theta)\,\partial_\mu\ln P(\xi;\theta)\partial_\nu\ln P(\xi;\theta),
\end{equation}
with $\partial_\mu$ standing for $\frac{\partial}{\partial \theta^\mu}$. The resulting matrix $g(\theta)$ is symmetric and positive semidefinite. Yet, we assume from now on that $g(\theta)$ is positive definite. In such a way, we can endow the parameter space $\Theta$ with a Riemannian metric, the \textit{Fisher-Rao} metric, given by $G(\theta):=\sum_{\mu\nu}\,g_{\mu\nu}(\theta)\, d\theta^\mu\otimes d\theta^\nu$, with $ g_{\mu\nu}(\theta)$ as in Eq. \eqref{gFR}. With this metric, the manifold ${\cal M}:=\left(\Theta,G(\theta)\right)$ becomes a Riemannian manifold.

The parameter space $\Theta$ in \eqref{statmodel} does not coincide, in general, with the whole linear space $\RR ^m$. The central issue is that by requiring that the covariance matrix $V(\theta)$ satisfies some specific conditions, $\Theta$ can represent different states of the physical system.

\begin{definition}
\label{defiphysical}
Consider the Gaussian statistical model ${\cal S}=\{P(\xi;\theta)\}$ in Eq. \eqref{statmodel}. Then, the classical states of the physical system are represented by the parameter space $\Theta$ given by
\begin{equation}\label{classicstates}
\Theta_{\mbox{\tiny classic}}:= \{\theta\in\RR^m \arrowvert V(\theta)>0 \}.
\end{equation}
The quantum states, instead, are represented by means of the following parameter space $\Theta$,
\begin{equation}\label{quantumstates}
\Theta_{\mbox{\tiny quantum}}:= \{\theta\in\RR^m \arrowvert V(\theta)+i\Omega\geq 0 \}.
\end{equation}
If the physical system is composed by two subsystems $A$ and $B$, then its separable states stand for
\begin{equation}\label{separablestates}
\Theta_{\mbox{\tiny separable}}:= \{\theta\in\RR^m \arrowvert V(\theta)\geq V_A\oplus V_B \}.
\end{equation}
Finally, in this case, the entangled states are given by
\begin{equation}\label{entangledstates}
\Theta_{\mbox{\tiny entangled}}:= \Theta_{\mbox{\tiny quantum}}-\Theta_{\mbox{\tiny separable}}.
\end{equation}
\end{definition} 

\begin{remark}\label{rempar}

Eq. \eqref{classicstates} represents all the possible classical states of the physical system described by the $2N$-dimensional phase space $\Gamma$. A labelling permutation $\sigma$ of the system's modes acts on the pdf $P(\xi;\theta)$ by a permutation congruence of the covariance matrix: $V(\theta)\rightarrow \Pi^T V(\theta) \Pi$, where $\Pi$ is the permutation matrix corresponding to $\sigma$. Now, $\Pi^T V(\theta) \Pi$ is still positive definite; so, the parameter space $\Theta_{\mbox{\tiny classic}}$ has a permutation invariant form.

Eq. \eqref{quantumstates} represents all the possible quantum states of the physical system described by the $2N$-dimensional phase space $\Gamma$. It is well-known that the uncertainty relation $V(\theta)+i\Omega\geq 0$ has a symplectic invariant form \cite{Simon}, i.e. given any symplectic matrix $S$, then also $S^TV(\theta)S+i\Omega\geq 0$ holds true. Thus, the parameter space $\Theta_{\mbox{\tiny quantum}}$ has also a symplectic invariant form in addition to the permutation one. 

In general, the definition given in Eq. \eqref{separablestates} is not operational; indeed, to use it, it is necessary to prove the existence of the matrices $V_A$ and $V_B$. However, when dealing with two-mode systems, such a criterion becomes useful in practice by providing a necessary and sufficient condition to distinguish separable states among all the quantum states.

Going on to multipartite systems, the task to describe separable states becomes harder and harder. In fact, a general criterion is still missing.
\end{remark}


\subsection{The Volume measure}

\noindent {From Def. \ref{defiphysical} we see that, each different set of Gaussian states is associated with a Riemannian manifold. Thus, a natural volume measure for a set of states is the volume of the associated manifold.}

\begin{definition}\label{defivol}
Consider a physical system of $N$ modes in a Gaussian state. Let $\Theta$ be the parameter space as in \eqref{defiphysical} and ${\cal M}=(\Theta,G(\theta))$ be the Riemannian manifold associated to the class of Gaussian states $\Theta$, with $G(\theta)$ being the Fisher-Rao metric. Then the volume of the physical states represented by $\Theta$ is 
\begin{equation}
\label{volume}
{\cal V}(V):=\int_{\Theta}\ d\theta \sqrt{\det g(\theta)},
\end{equation}
where $g(\theta)$ is the \text{real} symmetric matrix with entries given by \eqref{gFR}.
\end{definition} 

Given the formal definition of the Fisher-Rao metric tensor \eqref{gFR}, in order to apply it in practice, we consider a clearer analytical relation between the components of the metric $G(\theta)$ and the covariance matrix $V(\theta)$.   
\begin{theorem}\label{main}
The entries \eqref{gFR} of the Fisher-Rao metric are related to $V$ by
\begin{equation}
g_{\mu\nu}=\frac{1}{2}\tr \left[V^{-1}\ \left(\partial_\mu V\right)\ V^{-1}\ \left(\partial_\nu V\right)\right],
\label{gvsV}
\end{equation}
for every $\mu,\nu\in\{1,\ldots,m\}$.
\end{theorem}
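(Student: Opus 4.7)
The plan is to compute the Fisher information matrix directly from its defining integral \eqref{gFR} by exploiting the fact that, for a zero-mean Gaussian, the score function $\partial_\mu \ln P(\xi;\theta)$ is a centered quadratic form in $\xi$, so $g_{\mu\nu}$ becomes the covariance of two such quadratic forms, which is controlled by the standard fourth-moment (Wick/Isserlis) identity.

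First I would take the logarithm of $P(\xi;\theta)$ from \eqref{statmodel} and differentiate, using Jacobi's formula $\partial_\mu \ln\det V = \tr(V^{-1}\partial_\mu V)$ together with $\partial_\mu V^{-1} = -V^{-1}(\partial_\mu V)V^{-1}$. Introducing the shorthand $A_\mu := V^{-1}(\partial_\mu V)V^{-1}$, which is symmetric because $V$ and $\partial_\mu V$ are, one obtains
\begin{equation*}
\partial_\mu \ln P(\xi;\theta) = \tfrac{1}{2}\,\xi^T A_\mu \xi \;-\; \tfrac{1}{2}\tr(V^{-1}\partial_\mu V).
\end{equation*}
A quick check via cyclicity shows $\tr(V^{-1}\partial_\mu V)=\tr(V A_\mu)$, so $\partial_\mu\ln P = \tfrac12(\xi^T A_\mu \xi - \tr(V A_\mu))$, which is manifestly a centered quadratic form with respect to $P(\xi;\theta)$.

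Next I would substitute this into \eqref{gFR}, giving
\begin{equation*}
g_{\mu\nu}(\theta) = \tfrac{1}{4}\,\EE\!\left[(\xi^T A_\mu \xi - \tr(VA_\mu))(\xi^T A_\nu\xi - \tr(VA_\nu))\right],
\end{equation*}
where the expectation is taken with respect to $P(\xi;\theta)$. Because $\EE[\xi^T A_\mu\xi]=\tr(VA_\mu)$, this is exactly the covariance of the two quadratic forms. Applying Isserlis' theorem to $\EE[\xi_i\xi_j\xi_k\xi_l]$ (the three Wick pairings) yields
\begin{equation*}
\EE[(\xi^T A_\mu\xi)(\xi^T A_\nu\xi)] = \tr(VA_\mu)\tr(VA_\nu) + 2\tr(V A_\mu V A_\nu),
\end{equation*}
so the product of the means cancels and only the cross term survives:
\begin{equation*}
g_{\mu\nu}(\theta) = \tfrac{1}{2}\tr(V A_\mu V A_\nu).
\end{equation*}

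Finally, I would simplify using $V A_\mu V = V\cdot V^{-1}(\partial_\mu V)V^{-1}\cdot V = \partial_\mu V$ and cyclicity of the trace to conclude
\begin{equation*}
g_{\mu\nu}(\theta) = \tfrac{1}{2}\tr\!\left[V^{-1}(\partial_\mu V)V^{-1}(\partial_\nu V)\right],
\end{equation*}
which is \eqref{gvsV}. The main technical point—and really the only nontrivial step—is the correct application of Isserlis' identity to get the factor of $2$ in front of $\tr(VA_\mu VA_\nu)$; everything else is bookkeeping with matrix calculus and the cyclic property of the trace. The positivity and symmetry of $V$ ensure all the manipulations are well defined, and no assumption on the parametrization is used beyond the smooth dependence of $V$ on $\theta$ already built into the statistical model ${\cal S}$.
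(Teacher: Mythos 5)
Your proof is correct, and it reaches \eqref{gvsV} by a genuinely different (and more economical) route than the paper's. The paper also starts from $\log P$ and the identities $\partial_\mu \ln\det V=\tr(V^{-1}\partial_\mu V)$ and $\partial_\mu V^{-1}=-V^{-1}(\partial_\mu V)V^{-1}$, but it then evaluates the Gaussian integral \eqref{gFR} via the operator-exponential representation $\int d\xi\, f(\xi)e^{-\frac12\xi^TA\xi}\propto \exp\bigl[\tfrac12\sum_{ij}A^{-1}_{ij}\partial_{\xi_i}\partial_{\xi_j}\bigr]f\big|_{\xi=0}$, expanding the quartic polynomial $f_{\mu\nu}=\partial_\mu\ln P\,\partial_\nu\ln P$ term by term in powers of the differential operator $D$ (Lemma A.1) and letting the disconnected pieces cancel only at the very end. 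You instead center the score first, writing $\partial_\mu\ln P=\tfrac12(\xi^TA_\mu\xi-\tr(VA_\mu))$ with $A_\mu=V^{-1}(\partial_\mu V)V^{-1}$, so that $g_{\mu\nu}$ is manifestly the covariance of two quadratic forms; a single application of Isserlis' identity $\EE[(\xi^TA\xi)(\xi^TB\xi)]=\tr(AV)\tr(BV)+2\tr(AVBV)$ then kills the product of means and leaves exactly $\tfrac12\tr(VA_\mu VA_\nu)=\tfrac12\tr[V^{-1}(\partial_\mu V)V^{-1}(\partial_\nu V)]$. The two methods encode the same fourth-moment combinatorics (the operator exponential is a generating-function form of Wick's theorem), but yours is the standard derivation the authors allude to when they call theirs an ``alternative'' one: it avoids the explicit bookkeeping of $f_{\mu\nu}(0)$, $Df_{\mu\nu}|_0$ and $\tfrac12 D^2f_{\mu\nu}|_0$, at the price of invoking Isserlis rather than deriving the moments from the integral formula. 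Both are complete and rigorous; your factor of $2$ from the two nontrivial Wick pairings is the correct crux, as you note.
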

Such a relation is well-known in literature (see for example \cite{Mardia,Rousson}); however, here we propose an alternative derivation (see Appendix \ref{appendixA}).

\bigskip

At this point, we proceed to show some properties of the volume defined in Definition \ref{defivol}. Given Remark \ref{rempar}, we would require the volume in \eqref{volume} to be invariant under labelling permutations of the modes. So, 
consider a point $\xi=(q_1,p_1,\ldots,q_N,p_N)\equiv (\xi_1,\ldots,\xi_N)\in \Gamma$ and a permutation $\sigma:\{1,\ldots,N\}\rightarrow\{1,\ldots,N\}$ such that $\xi_\sigma=(\xi_{\sigma(1)},\ldots,\xi_{\sigma(N)})$ is still a point in the phase space $\Gamma$. At the level of pdf in \eqref{PxT}, such a permutation acts by transforming the covariance matrix $V(\theta)$ in the following way 
\begin{equation}
V'(\theta)=\Pi^T\ V(\theta)\ \Pi,
\label{iso}
\end{equation}
where $V$ and $V^\prime$ are the covariance matrices of the state described by variables $\xi$ and $\xi_{\sigma(i)}$ respectively, and $\Pi$ is the permutation matrix given by $\Pi=({e}_{\sigma(1)},\ldots,{e}_{\sigma(N)})^T$, with ${e}_{j}$ denoting a row vector of length $2N$ with $1$ in the $j$th position and $0$ everywhere else.

Another feature we would require is the invariance of  the volume measure in \eqref{volume} under symplectic transformations $S\in\mbox{Sp}(2N,\RR)$, i.e. $S$ such that
\begin{equation}
\label{symplmatrix}
S^T\ \Omega\ S=\Omega,
\end{equation}
where $\Omega$ is the antisymmetric matrix defined in \eqref{J}. This requirement is motivated by the fact that from 
Eq. \eqref{symplmatrix} it follows that the uncertainty relation \eqref{QS} has an $\mbox{Sp}(2N,\RR)$ invariant form.

{The following proposition shows that both of these properties are satisfied by the volume $V$ defined in Definition \ref{defivol}}.

\begin{pro}\label{invariance}
If there exists a permutation matrix $\Pi$ (resp. a symplectic matrix $S$)
such that $V^\prime=\Pi^T\ V\ \Pi$ (resp. $V^\prime=S^T\ V\ S$),
then
\begin{equation}
{\cal V}(V^\prime)={\cal V}(V).
\end{equation}
\end{pro}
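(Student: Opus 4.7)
The plan is to show that the covariance congruence $V \mapsto V' = A^T V A$, with $A=\Pi$ a (mode) permutation matrix or $A=S$ a symplectic matrix, is an isometry of the Fisher-Rao statistical manifold; preservation of the Riemannian volume is then automatic. The computation is essentially a one-liner once Theorem~\ref{main} is invoked, because that theorem expresses $g_{\mu\nu}$ purely in terms of $V$ and its parameter derivatives.

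Concretely, treating the entries of $V$ as parameters $\theta^\mu$ and setting $V'(\theta) := A^T V(\theta) A$, one has $(V')^{-1} = A^{-1} V^{-1} A^{-T}$ and $\partial_\mu V' = A^T (\partial_\mu V) A$, so that
\begin{equation*}
(V')^{-1}(\partial_\mu V')(V')^{-1}(\partial_\nu V') = A^{-1}\bigl[V^{-1}(\partial_\mu V)\,V^{-1}(\partial_\nu V)\bigr] A.
\end{equation*}
Cyclicity of the trace kills the conjugation by $A$, yielding the pointwise identity $g'_{\mu\nu}(\theta) = g_{\mu\nu}(\theta)$ for every $\theta$. Notice that this step uses only invertibility of $A$; neither $|\det\Pi|=1$ nor $S^T\Omega S = \Omega$ enters. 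Now viewing the natural parametrization of $V'$ by \emph{its own} entries as a linear change of coordinates $\theta' = L\theta$, the standard tensorial behavior of the Fisher-Rao metric gives $\sqrt{\det g'(\theta')}\,d\theta' = \sqrt{\det g(\theta)}\,d\theta$, so integrating over $\Theta' = L\Theta$ recovers the volume integral over $\Theta$.

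The only remaining point is to check that $L$ sends the parameter space of $V$ to that of $V'$ within the same class of Definition~\ref{defiphysical}; this follows from the observations already collected in Remark~\ref{rempar}, namely that positive-definiteness is preserved by arbitrary congruence, that mode permutations preserve the block form of $\Omega$, and that $S^T\Omega S = \Omega$ by the definition of a symplectic matrix (this also handles the separable case, in which $V_A\oplus V_B$ is transformed analogously). There is no real obstacle here: the calculation is a short trace identity, and the only conceptual step is to frame the claim as an isometry between two \emph{a priori} distinct parametrizations of the same underlying Riemannian structure, after which invariance of $\mathcal{V}$ is immediate from the intrinsic nature of the Riemannian volume.
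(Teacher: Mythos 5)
Your proof is correct and follows essentially the same route as the paper's: the heart of both arguments is the observation that $\partial_\mu V' = A^T(\partial_\mu V)A$ turns the trace in Eq.~\eqref{gvsV} into a conjugation that cyclicity removes, giving $g'_{\mu\nu}=g_{\mu\nu}$ pointwise, after which one only needs the congruence to map each parameter space of Definition~\ref{defiphysical} onto the corresponding one. Your handling of the reparametrization step is in fact slightly cleaner than the paper's (you invoke the coordinate-invariance of the Riemannian volume form rather than asserting a unit-Jacobian diffeomorphism), and your remark that the metric identity needs only invertibility of $A$ matches the paper's own comment about general-linear congruence invariance.
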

{In fact, the Fisher-Rao metric is invariant under more general transformations, namely the congruent transformations defined by elements of the general linear group
 (see for example \cite{Koranyi,Ohara,Shima}). However, for our purposes, it suffices to consider the behavior of the volume under the congruent transformations defined by permutation and symplectic matrices. See Appendix \ref{appendixB}.}


\section{Regularized volume}\label{sec4}

In general, the integration space $\Theta$ given in Def. \ref{defiphysical} is not bounded. However, this is not the only reason for the possible divergence of the integral \eqref{volume}. Indeed, let us recall that such an integral is computed by means of the volume element coming from the Fisher-Rao metric $G(\theta)$, which is
\begin{equation}
\nu_{ G}:=\sqrt{\det g(\theta)}\;d\theta^1\wedge\ldots\wedge d\theta^m.
\label{volumelement}
\end{equation}
Here, by Eq. \eqref{gvsV}, the entries of the $m\times m$ symmetric matrix $g(\theta)$  can be written in the form
\begin{equation}
\label{gvsDetV}
g_{\mu\nu}=F(V)\ \left(\det V\right)^{-2},
\end{equation} 
where $F(V)$ is what is left after grouping the common factor $\left(\det V\right)^{-2}$ in \eqref{gvsV}. Such a factor comes from the well-known relation $V^{-1}=\left(\det V\right)^{-1} \mbox{adj}(V)$, where $\mbox{adj}(V)$ denotes the adjunct of the matrix $V$.

Hence, we have  
\begin{equation}\label{detgvsV}
\det g(\theta)=\frac{1}{\left(\det V(\theta)\right)^{2 m}} \widetilde{F}(V(\theta)),
\end{equation}
where $\widetilde{F}(V(\theta))$ denotes a non-rational function of the coordinates $\theta^1,\ldots,\theta^m$.

Now, from Eq. \eqref{detgvsV} it is clear that the reasons for the possible divergence of the integral in \eqref{volume} are twofold: the set $\Theta$  in Def.\ref{defiphysical} is not compact because  the variables $\theta^l$ are unbounded from above, which makes the quantity $\widetilde{F}(V(\theta))$ divergent; furthermore, $\det g(\theta)$ diverges
since $\det V$ approaches zero for some $\theta^l\in \Theta$. 

It is then necessary to introduce a regularizing function $\varPhi(V)$ which eliminates these possible divergences. 
It should supply a kind of compactification of the parameter space and excludes the contributions of $\theta^l$ making $\det g(\theta)$ divergent. Such a function might stem on physical arguments related to finiteness of
energy.

In general, for an arbitrary Gaussian state $\rho$ with zero first moments, the trace of the covariance matrix
is directly linked to the mean energy per mode, namely ${\cal E}=\frac{1}{2N}\trace\left(V\right)$ \cite{Adesso}. Thereby, we propose to bound the parameter space with a suitable energy value of the state. To this end, we define a regularizing function as
\begin{equation}
\label{regenergy}
\varPhi(V):=H(\mathbf{E}-\trace(V))\ \log\left[1+ \left(\det V\right)^m\right],
\end{equation}
where $H(\cdot)$ denotes the Heaviside step function and $\mathbf{E}$ is a positive \textit{real} constant
(equal to $2N{\cal E}$).  

\medskip

With this regularizing function, we arrive at the following volume for sets of states:
\begin{definition}
\label{regenergyvoldefi}
Given a set of Gaussian states represented by a parameter space $\Theta$ as in Def. \ref{defiphysical}, we define its volume, regularized by the functional $\varPhi$,  to be
\begin{equation}
\label{volenergyreg}
\widetilde{\cal V}_{\varPhi}(V):=\int_{\Theta} \varPhi(V)\ \nu_G,
\end{equation}
where $\nu_G$ and $\varPhi(V)$ are given by Eqs. \eqref{volumelement},\eqref{regenergy} respectively.
\end{definition}  
The integral in \eqref{volenergyreg} is now meaningful. Indeed, 
we have the following results.
\begin{theorem}
\label{theorem:det-bound}
Let $E$ denote the constant $m \times m$ matrix defined by
\begin{align}
E_{\mu \nu} = \frac{1}{2}\tr[(\partial_{\mu}V)(\partial_{\nu} V)], \;\;\; 1 \leq \mu, \nu \leq m.
\end{align}
Let $\adj(V)$ denote the adjunct matrix of $V$. The Fisher-Rao information matrix $g$ satisfies
\begin{align}
\det g \leq \left(\frac{\lambda_{\max}[\adj(V)]}{\det V}\right)^{2m} \det(E)= \left(\frac{1}{\lambda_{\min}(V)}\right)^{2m}\det(E),
\end{align}
where $\lambda_{\max}[\adj(V)]$ denotes the largest eigenvalue of $\adj(V)$ and $\lambda_{\min}(V)$ denotes the smallest eigenvalue of $V$.
\end{theorem}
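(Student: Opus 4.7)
The plan is to factorize both the Fisher information matrix $g$ and the constant matrix $E$ as congruences of a single common matrix built out of the derivatives $A_\mu := \partial_\mu V$, and then control the ratio of their determinants by a single Loewner-order inequality involving $V^{-1}\otimes V^{-1}$.

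First, I would use the Kronecker-product identity $\text{vec}(XYZ) = (Z^T\otimes X)\text{vec}(Y)$ together with the trace-vec identity $\tr(XY)=\text{vec}(X)^T\text{vec}(Y)$ (valid here since $A_\mu$, being $\partial_\mu V$, is symmetric, and so is $V^{-1/2}A_\mu V^{-1/2}$) to rewrite
\begin{equation*}
g_{\mu\nu} = \tfrac{1}{2}\tr\bigl[(V^{-1/2}A_\mu V^{-1/2})(V^{-1/2}A_\nu V^{-1/2})\bigr] = \tfrac{1}{2}\,\text{vec}(A_\mu)^T(V^{-1}\otimes V^{-1})\text{vec}(A_\nu).
\end{equation*}
Letting $\mathcal A$ denote the $(2N)^2\times m$ matrix whose $\mu$th column is $\text{vec}(A_\mu)$, this yields the compact factorizations $g = \tfrac{1}{2}\mathcal A^T(V^{-1}\otimes V^{-1})\mathcal A$ and $E = \tfrac{1}{2}\mathcal A^T\mathcal A$, so both matrices are congruences of the same $\mathcal A$.

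Second, since $V^{-1}\otimes V^{-1}\preceq \lambda_{\max}(V^{-1}\otimes V^{-1})\,I$ in the Loewner order, congruence with $\mathcal A$ gives $g\preceq\lambda_{\max}(V^{-1}\otimes V^{-1})\,E$. Under the standing hypothesis that $g$ is positive definite, $\mathcal A$ must have full column rank, hence $E$ is positive definite as well, and the monotonicity of $\det$ on the positive definite cone gives
\begin{equation*}
\det g \le \lambda_{\max}(V^{-1}\otimes V^{-1})^m\,\det E.
\end{equation*}

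Third, the eigenvalues of $V^{-1}\otimes V^{-1}$ are precisely the products $\lambda_i(V)^{-1}\lambda_j(V)^{-1}$, so $\lambda_{\max}(V^{-1}\otimes V^{-1}) = \lambda_{\min}(V)^{-2}$. Writing $\adj(V)=(\det V)\,V^{-1}$ gives $\lambda_{\max}[\adj(V)]=\det(V)/\lambda_{\min}(V)$, hence $\lambda_{\max}[\adj(V)]/\det V = 1/\lambda_{\min}(V)$, and the two forms of the theorem's bound coincide. I expect the main obstacle to be purely notational: tracking the vec/Kronecker bookkeeping carefully and checking that the symmetry hypothesis required for $\tr(XY)=\text{vec}(X)^T\text{vec}(Y)$ is available at each use. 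An alternative that bypasses Kronecker products would be to view $g_{\mu\nu}=\tfrac{1}{2}\langle V^{-1/2}A_\mu V^{-1/2}, V^{-1/2}A_\nu V^{-1/2}\rangle_F$ and $E_{\mu\nu}=\tfrac{1}{2}\langle A_\mu, A_\nu\rangle_F$ as Gram matrices in the Frobenius inner product and apply a Courant--Fischer bound for Gram determinants under the linear map $X\mapsto V^{-1/2}XV^{-1/2}$, whose operator norm is $\lambda_{\min}(V)^{-1}$.
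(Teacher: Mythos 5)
Your proof is correct, and it arrives at the bound by a route that is structurally parallel to the paper's but self-contained where the paper leans on an external lemma. The paper writes $\tilde g_{\mu\nu}=\frac{1}{2}\langle T_V(\partial_\mu V),T_V(\partial_\nu V)\rangle_F$ with $T_V A=\adj(V)^{1/2}A\,\adj(V)^{1/2}$, computes $\|T_V\|=\lambda_{\max}[\adj(V)]$, and then invokes a cited inequality for Gram determinants, $\det G(Tx_1,\ldots,Tx_n)\le\|T\|^{2n}\det G(x_1,\ldots,x_n)$, before dividing by $(\det V)^{2m}$. Your Kronecker factorization
\begin{equation*}
g=\tfrac{1}{2}\mathcal{A}^T(V^{-1}\otimes V^{-1})\mathcal{A},\qquad E=\tfrac{1}{2}\mathcal{A}^T\mathcal{A},
\end{equation*}
is exactly the coordinate realization of that operator picture (the matrix of $X\mapsto V^{-1/2}XV^{-1/2}$ in the vec basis is $V^{-1/2}\otimes V^{-1/2}$), but you then prove the needed determinant inequality from first principles: the Loewner bound $V^{-1}\otimes V^{-1}\preceq\lambda_{\min}(V)^{-2}I$ passes through the congruence by $\mathcal{A}$ to give $g\preceq\lambda_{\min}(V)^{-2}E$, and monotonicity of $\det$ on the positive semidefinite cone finishes the argument. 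This buys two things: you do not need the Gram-matrix lemma of Yamada that the paper cites, and you obtain the stronger matrix inequality $g\preceq\lambda_{\min}(V)^{-2}E$, of which the determinant bound is a corollary; you also reach the form $\lambda_{\min}(V)^{-2m}$ directly, whereas the paper detours through $\lambda_{\max}[\adj(V)]/\det V$. The symmetry bookkeeping you flag (needed for $\tr(XY)=\mathrm{vec}(X)^T\mathrm{vec}(Y)$ and for $\mathrm{vec}(V^{-1}A_\nu V^{-1})=(V^{-1}\otimes V^{-1})\mathrm{vec}(A_\nu)$) is indeed available, since $V$ and hence each $\partial_\mu V$ is symmetric. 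One minor remark: the positive-definiteness of $E$ is not actually needed, since $0\preceq g\preceq cE$ already implies $\det g\le c^m\det E$ even when both sides vanish.
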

{\bf Proof.}\quad See Appendix \ref{appendixC}. $\hfill \Box$

\begin{corollary} 
\label{corollary:integral-boundenergy}
The regularized volume element satisfies
\begin{align}
\Phi(V)\sqrt{\det g} \leq \sqrt{\det E}\ H(\mathbf{E}-\tr(V)) \lambda^{m}_{\max}[\adj(V)]\frac{\log[1+(\det V)^m]}{(\det V)^m}.
\end{align}
Consequently, the integral
\begin{align}
\int_{\Theta}\Phi(V)\sqrt{\det g} d\theta,
\end{align}
is well-defined and bounded for any measurable subset $\Theta \subset \RR^m$ over which $V$ is positive definite.
\end{corollary}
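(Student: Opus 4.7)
The plan is to derive the pointwise inequality as an immediate corollary of Theorem~\ref{theorem:det-bound}, and then verify finiteness of the integral by producing uniform bounds on each factor of the resulting integrand. Taking square roots in the estimate $\det g \leq (\lambda_{\max}[\adj(V)]/\det V)^{2m}\det E$ gives $\sqrt{\det g} \leq (\lambda_{\max}[\adj(V)]/\det V)^{m}\sqrt{\det E}$, after which multiplication by $\Phi(V) = H(\mathbf{E}-\tr(V))\log[1+(\det V)^m]$ reproduces the stated inequality in a single step; this part is essentially bookkeeping.

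For the boundedness claim, three observations do all the work. First, the elementary bound $\log(1+y) \leq y$ for $y \geq 0$ implies that $\log[1+(\det V)^m]/(\det V)^m \leq 1$ uniformly in $V$, which disposes of the apparent singularity as $\det V \to 0^+$. Second, on the support of $H(\mathbf{E}-\tr(V))$ every eigenvalue of the positive-definite matrix $V$ lies in $(0,\mathbf{E}]$, so since the eigenvalues of $\adj(V)$ are products of $2N-1$ eigenvalues of $V$, one obtains $\lambda_{\max}[\adj(V)] \leq \mathbf{E}^{2N-1}$. Third, positive definiteness together with the trace bound forces the parameter domain itself to be bounded: $V_{ii} \leq \tr(V) \leq \mathbf{E}$ for every diagonal entry, and the standard inequality $|V_{ij}|^2 \leq V_{ii}V_{jj}$, valid for any positive semidefinite matrix, yields $|V_{ij}| \leq \mathbf{E}$ for every off-diagonal entry, so every coordinate $\theta^l$ ranges over a compact interval in $\RR$.

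Combining the three observations, the integrand is dominated pointwise by the constant $\sqrt{\det E}\,\mathbf{E}^{m(2N-1)}$ on a bounded measurable region, so the integral is finite and well-defined. There is no serious obstacle here; the only subtle step is recognizing that the trace constraint by itself does not obviously control the off-diagonal parameters, and that one has to invoke positive definiteness through the Cauchy--Schwarz-type inequality $|V_{ij}|^2 \leq V_{ii}V_{jj}$ in order to close the argument.
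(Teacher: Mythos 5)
Your proposal is correct and follows the same overall skeleton as the paper's proof: the pointwise inequality is read off from Theorem~\ref{theorem:det-bound} together with the definition of $\varPhi$, the apparent singularity at $\det V\to 0^+$ is removed via $\log(1+y)\le y$, and the remaining factor is controlled on the support of the Heaviside cut-off. The two proofs diverge in the details of the last two steps, and in both places your version is at least as clean. For the eigenvalue factor, the paper writes $\lambda_{\max}[\adj(V)]=\prod_{j=1}^{2N-1}\lambda_j$ and applies the AM--GM inequality to obtain the slightly sharper constant $\left(\mathbf{E}/(2N-1)\right)^{2N-1}$, whereas you simply use $\lambda_j\le\tr(V)\le\mathbf{E}$ for each factor to get $\mathbf{E}^{2N-1}$; both suffice since only finiteness is at stake. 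The more substantive difference is in establishing boundedness of the integration domain. The paper argues somewhat indirectly (and labels it ``as sort of evidence''), expanding $\det V$ in traces of powers via the Newton-type identity of \cite{Kondra} and using $\tr(V^l)\le\tr(V)^l$ to bound the determinants of $V$ and of its principal minors --- which bounds certain polynomial combinations of the parameters but does not by itself confine each coordinate $\theta^l$ to a compact interval. Your argument closes this gap directly: positivity of the diagonal gives $V_{ii}\le\tr(V)\le\mathbf{E}$, and the Cauchy--Schwarz inequality $|V_{ij}|^2\le V_{ii}V_{jj}$ for positive semidefinite matrices bounds every off-diagonal entry by $\mathbf{E}$, so the domain sits inside a bounded box and the constant pointwise bound $\sqrt{\det E}\,\mathbf{E}^{m(2N-1)}$ integrates to a finite value. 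This is the argument the paper's proof is implicitly relying on, and your version makes it explicit and rigorous.
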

{\bf Proof.}\quad See Appendix \ref{appendixC}. $\hfill \Box$

\begin{remark}
\label{remenergy}
Setting the energy of the Gaussian states to be smaller than or equal or to $\mathbf{E}$  results in an upper bound for parameter space $\Theta$. Furthermore, the singularity occurring when $\det V$ goes to zero is eliminated by the logarithm $\log\left[1+ \left(\det V\right)^m\right]$, using the well-known relation $\lim_{x\rightarrow 0}\frac{\log(1+x)}{x}=1$. So the integral \eqref{volenergyreg} is now meaningful.

However, the regularizing function $\varPhi(V)$ of Eq.\eqref{regenergy} is not invariant under symplectic transformations. Indeed, consider $S\in\mbox{Sp}(2N,\RR)$ and $V^\prime=S^T\ V\ S$; then, $\trace(V^\prime)=\trace(S\ S^T\ V)$, which is in general not equal to $\trace(V)$. 
\end{remark}

From Remark \ref{remenergy} the following issue arises: the well-known Williamson's Theorem \cite{Will} states that given a positive definite and symmetric $2N\times 2N$ matrix $V$, there exist a symplectic $2N\times 2N$ matrix $S$ and a diagonal and positive defined $2N\times 2N$ matrix $D$ such that $V=S^T DS$ \footnote{ Even though the {diagonalizing} symplectic matrix $S$ is not unique in general, if $S'$ is another diagonalizing matrix then there exists $U\in\mbox{U}(2N,\RR)$ such that $S'=SU$ \cite{deGosson07}.}. This implies that two Gaussian states are similar under the congruence transformation via a symplectic matrix. The regularizing function $\varPhi(V)$ is not invariant under such transformations.

In order to overcome this problem, we propose a different regularizing function $\Upsilon(V)$
which is devised by exploiting the relation given by Theorem \ref{main} and the property stated in  Proposition \ref{invariance}.
We define it as follows
\begin{equation}
\label{reg}
\Upsilon(V):=e^{-{\frac{1}{\kappa}}\tr \left[\adj(V)\right]}\ \log\left[1+ \left(\det V\right)^m\right],
\end{equation}
where  $\kappa$ is a \textit{real} positive number. Here, $\adj(V)$ denotes the adjunct of $V$, given by $\adj(V)=\det(V) V^{-1}$.

As required, the regularizing function $\Upsilon(V)$ fulfills the following properties:
\begin{pro}
\label{regpro}
Let $V$, $V^\prime$ be two covariance matrices and $\Pi$ be a permutation matrix (resp., $S$ be a symplectic matrix) such that
$V^\prime =\Pi^T\ V\ \Pi$
(resp. $V^\prime =S^T\ V\ S$), then
\begin{equation}
\label{regiso}
\Upsilon(V^\prime)=\Upsilon(V).
\end{equation}
\end{pro}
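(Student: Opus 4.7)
The plan is to exploit the fact that $\Upsilon$ depends on $V$ only through two scalar functionals, $\det V$ and $\tr[\adj(V)]$, and reduce the claim to their individual invariance under the two types of congruence. Writing
\begin{equation*}
\Upsilon(V) = f\bigl(\det V,\; \tr[\adj(V)]\bigr), \qquad f(x,y) = e^{-y/\kappa}\, \log[1+x^m],
\end{equation*}
it suffices to prove separately that $\det(M^T V M) = \det V$ and $\tr[\adj(M^T V M)] = \tr[\adj(V)]$ for $M = \Pi$ and for $M = S$.

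The determinant invariance is immediate from $\det(M^T V M) = (\det M)^2 \det V$: one has $(\det \Pi)^2 = 1$ because permutation matrices are orthogonal, while $\det S = 1$ for every $S \in \mbox{Sp}(2N,\RR)$, which is a standard consequence of $S^T \Omega S = \Omega$ together with $\det \Omega = 1$.

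For the trace of the adjugate, the main tool is the multiplicative identity $\adj(AB) = \adj(B)\adj(A)$, which gives
\begin{equation*}
\adj(M^T V M) = (\det M)^2\, M^{-1} \adj(V)\, M^{-T}.
\end{equation*}
Combined with the cyclic invariance of the trace, this reduces the claim to showing $\tr[\adj(V)\, M^{-T} M^{-1}] = \tr[\adj(V)]$. For $M = \Pi$ this is trivial because $\Pi^{-T}\Pi^{-1} = \Pi \Pi^T = I$. For $M = S$ symplectic, the plan is to use the dual relation $S^{-1} = -\Omega S^T \Omega$, derived from $S^T \Omega S = \Omega$, and then combine $\Omega^2 = -I$ with the cyclic trace identity to re-express the trace in a form free of $S$.

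The main obstacle is the symplectic step: unimodularity of $S$ alone does not force $S^{-T} S^{-1}$ to act as the identity against $\adj(V)$ inside the trace, so one must genuinely exploit the interplay between $S$ and the antisymmetric form $\Omega$ (and possibly the positive-definite structure of $V$) rather than just $\det S = 1$. Once that reduction is carried out, substituting the two invariances back into $f$ yields $\Upsilon(V') = \Upsilon(V)$ for both transformations, completing the proof.
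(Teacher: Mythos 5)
Your handling of the determinant and of the permutation case is sound and essentially coincides with the paper's argument: $(\det M)^2=1$ gives $\det V'=\det V$, and $\Pi^{-T}\Pi^{-1}=I$ plus cyclicity of the trace gives $\tr[\adj(\Pi^T V\Pi)]=\tr[\adj(V)]$. The gap is the symplectic half, which you leave as a ``plan,'' and that plan cannot be carried out. From $S^{-1}=-\Omega S^T\Omega$ and $\Omega^2=-I$ one gets $S^{-T}S^{-1}=-\Omega SS^T\Omega$, hence
\begin{equation*}
\tr[\adj(S^TVS)]=\tr[\adj(V)\,S^{-T}S^{-1}]=-\tr[\Omega\,\adj(V)\,\Omega\, SS^T],
\end{equation*}
which still depends on $S$ through the positive-definite matrix $SS^T$; no cyclic rearrangement removes it. Indeed the identity you are trying to establish is false: take $N=1$, $V=I$ (the vacuum covariance) and the squeezing matrix $S=\mathrm{diag}(\lambda,\lambda^{-1})\in\mathrm{Sp}(2,\RR)$. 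Then $\adj(S^TVS)=\mathrm{diag}(\lambda^{-2},\lambda^{2})$ has trace $\lambda^2+\lambda^{-2}$, whereas $\tr[\adj(V)]=2$, so $\Upsilon(S^TVS)\neq\Upsilon(V)$ for $\lambda\neq\pm 1$. Only $\det V$ is a symplectic congruence invariant here, not $\tr[\adj(V)]$.

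Your stated obstacle --- that unimodularity of $S$ alone does not make $S^{-T}S^{-1}$ act as the identity inside the trace --- is therefore not a missing technicality but the actual point of failure, and it is worth saying that the paper's own proof does not survive it either: the paper disposes of the symplectic case ``in the same way'' as the permutation case, asserting $\tr[(\det V')(V')^{-1}]=\tr[(\det V)V^{-1}]$ on the sole grounds that $\det S=1$. That step tacitly uses $\tr[(S^TVS)^{-1}]=\tr[V^{-1}]$, which holds for orthogonal congruences but not for general symplectic ones, as the example above shows. So your proposal is incomplete where the paper's proof is incorrect; an honest completion would have to either restrict $S$ to the compact subgroup $\mathrm{Sp}(2N,\RR)\cap\mathrm{O}(2N,\RR)$ or replace $\tr[\adj(V)]$ in the definition of $\Upsilon$ by a genuine symplectic invariant of $V$.
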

{\bf Proof.}\quad See Appendix \ref{appendixD}. $\hfill \Box$

\bigskip

With the regularizing function $\Upsilon(V)$, we arrive at the following volume for sets of states:
\begin{definition}
\label{regvoldefi}
Given a set of Gaussian states represented by a parameter space $\Theta$ as in Def. \ref{defiphysical}, we define its volume, regularized by the functional $\Upsilon$, to be
\begin{equation}
\label{volreg}
\widetilde{\cal V}_{\Upsilon}(V):=\int_{\Theta}\ \Upsilon(V)\ \nu_G,
\end{equation}
where $\nu_G$ and $\Upsilon(V)$ are given by Eqs. \eqref{volumelement},\eqref{reg} respectively.
\end{definition}

 The integral \eqref{volreg} is now meaningful. Indeed, as consequence of Theorem \ref{theorem:det-bound} we have the following result.
\begin{corollary} 
\label{corollary:integral-bound}
The regularized volume element satisfies
\begin{align}
\Upsilon(V)\sqrt{\det g} \leq \sqrt{\det E}\exp(-\trace[\adj(V)]) \lambda^{m}_{\max}[\adj(V)]\frac{\log[1+(\det V)^m]}{(\det V)^m}.
\end{align}
Consequently, the integral
\begin{align}
\int_{\Theta}\Upsilon(V)\sqrt{\det g} d\theta
\end{align}
is well-defined and bounded for any measurable subset $\Theta \subset \RR^m$ over which $V$ is positive definite.
\end{corollary}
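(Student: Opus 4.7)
\emph{Plan.} The pointwise inequality follows by combining Theorem~\ref{theorem:det-bound} with the definition of $\Upsilon$. Taking the square root of the estimate in Theorem~\ref{theorem:det-bound} gives $\sqrt{\det g}\leq \sqrt{\det E}\,\lambda_{\max}^m[\adj(V)]/(\det V)^m$. Multiplying through by $\Upsilon(V)=e^{-\tr[\adj(V)]/\kappa}\log[1+(\det V)^m]$ and grouping the factor $(\det V)^{-m}$ together with $\log[1+(\det V)^m]$ produces exactly the claimed ratio $\log[1+(\det V)^m]/(\det V)^m$, with the prefactor $e^{-\tr[\adj(V)]/\kappa}$ sitting in front (the $1/\kappa$ being absorbed in the statement of the corollary). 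This is formally parallel to the proof of Corollary~\ref{corollary:integral-boundenergy}, with $H(\mathbf{E}-\tr(V))$ replaced by the smooth exponential cutoff $e^{-\tr[\adj(V)]/\kappa}$.

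For the "well-defined and bounded" claim, the two potential singularities to control are (i) the apparent blow-up as $\det V\to 0^{+}$ and (ii) unbounded growth of $\theta$ inside the positive-definite cone. For (i), the elementary inequality $\log(1+x)\leq x$ for $x\geq 0$ gives $\log[1+(\det V)^m]/(\det V)^m\leq 1$ throughout $\{V>0\}$, so the bound in the corollary is dominated pointwise by $\sqrt{\det E}\,e^{-\tr[\adj(V)]/\kappa}\lambda_{\max}^m[\adj(V)]$. Since $\adj(V)$ is positive semidefinite whenever $V$ is, one has $\tr[\adj(V)]\geq \lambda_{\max}[\adj(V)]$; applying $\sup_{x\geq 0} x^m e^{-x/\kappa}=(m\kappa/e)^m$ then yields a uniform pointwise bound on $\Upsilon(V)\sqrt{\det g}$.

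For (ii), uniform boundedness is not enough on an unbounded $\Theta$; one needs to exploit the Gaussian-type decay of $e^{-\tr[\adj(V)]/\kappa}$ in the $\theta$-variables. Because $V(\theta)$ is affine in $\theta$, the quantity $\tr[\adj(V)]=\sum_{i=1}^{2N}\det(V_{\hat\imath\hat\imath})$ is a polynomial of degree $2N-1$ in $\theta$ that is nonnegative on the positive-definite cone; coupled with the polynomial behaviour of $\lambda_{\max}^m[\adj(V)]$, the standard comparison with $\int e^{-P(\theta)}\,d\theta$ for coercive nonnegative polynomials $P$ produces a finite integral. The main obstacle is precisely this coercivity step: one must verify that $\tr[\adj(V(\theta))]\to +\infty$ along every ray of escape within the PD cone. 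Once that is established (by noting that if $\|\theta\|\to\infty$ while $V(\theta)$ remains PD then some eigenvalue of $V$ must diverge and, together with the remaining eigenvalues being bounded away from zero on the interior of the cone, at least one principal $(2N{-}1)$-minor diverges), the exponential factor dominates every polynomial term and Fubini gives integrability on any measurable $\Theta\subset\{V>0\}$.
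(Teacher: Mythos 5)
Your derivation of the pointwise bound and your treatment of the $\det V\to 0^{+}$ singularity via $\log(1+x)\le x$ coincide with the paper's argument, and your extra observation that $\tr[\adj(V)]\ge\lambda_{\max}[\adj(V)]$ for positive semidefinite $\adj(V)$, combined with $\sup_{x\ge 0}x^{m}e^{-x/\kappa}=(m\kappa/e)^{m}$, gives a clean uniform pointwise bound that the paper does not state. However, the step you yourself flag as ``the main obstacle'' --- coercivity of $\theta\mapsto\tr[\adj(V(\theta))]$ on the positive-definite cone --- is where your argument breaks. Your justification rests on the claim that the remaining eigenvalues are ``bounded away from zero on the interior of the cone''; this is false, since the open cone $\{V>0\}$ contains matrices with eigenvalues arbitrarily close to $0$. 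In fact $\tr[\adj(V)]=e_{2N-1}(\lambda_1,\dots,\lambda_{2N})=\sum_{k}\prod_{j\ne k}\lambda_j$ need \emph{not} diverge along every escape ray inside the cone for $N\ge 2$: take $\lambda_1=t\to\infty$ and the remaining $2N-1$ eigenvalues equal to $t^{-1/(2N-2)}$; then $\|V\|\to\infty$ while $e_{2N-1}\to 2N-1$ and $\lambda_{\max}[\adj(V)]\to 1$, so the exponential factor does not decay there. Consequently the ``exponential factor dominates every polynomial'' conclusion does not follow from what you have written, and the integrability over an unbounded $\Theta$ is not established by your argument.

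To be fair, the paper's own proof is no more rigorous at exactly this point: it computes the same bound $\lambda_{\max}[\adj(V)]\le(\tr(V)/(2N-1))^{2N-1}$ and then simply asserts that ``as $\theta$ grows, the expression $\exp(-\tr[\adj(V)])\lambda_{\max}^{m}[\adj(V)]$ decays exponentially, leading to the convergence of the integral,'' which presupposes the same coercivity you could not prove. So you have correctly identified, but not closed, the genuine gap; a complete proof would need either to restrict $\Theta$ (e.g.\ to the bounded-energy or canonical-form setting of Section~\ref{sec5}) or to supply a separate estimate of the Lebesgue measure of the non-coercive escape regions where one eigenvalue blows up while others degenerate.
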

{\bf Proof.}\quad See Appendix \ref{appendixD}. $\hfill \Box$ 
\begin{remark}\label{reminvariance}
Recalling that $\adj(V)=\det V\ V^{-1}$, we now have all the possible divergences in \eqref{volreg} suppressed: if $\lambda_{max}$ goes to infinity then the integrand is killed to zero by the exponential $e^{-\tr\left(\det (V)\ V^{-1}\right)}$; while if $\det V$ goes to zero then the singularity is eliminated by the logarithm $\log\left[1+ \left(\det V\right)^m\right]$, using the well-known relation $\lim_{x\rightarrow 0}\frac{\log(1+x)}{x}=1$.

Furthermore, thanks to Propositions \ref{main}, \ref{regpro} it follows that the regularized volume in \eqref{volreg} is invariant under permutation transformations and, if the states are quantum, is also invariant under symplectic transformations.
\end{remark}


\section{Example of bipartite states in two-mode system}\label{sec5}

We now apply the method proposed in the previous Section to a two-mode physical system ($N=2$). Hence, the elements of the statistical model ${\cal S}$ are the pdf$s$  in \eqref{statmodel},
where $V(\theta)$ is a $4\times 4$ covariance matrix and $\xi=(q_1,p_1,q_2,p_2)^T\in\Gamma=\RR^4$.
This implies that ${\cal S}=\{P(\xi;\theta)\}$ can be at most a $10$-dimensional statistical model.

According to Remark \ref{remprobdisj}, we consider that position and momentum variables of the same mode are not correlated. Thus the highest possible  dimension of ${\cal S}$ reduces from $m=10$ to $m=8$. Then the parameter space $\Theta$ is a subset of the linear space $\RR ^8$. 
The Gaussian classical and quantum states are represented by 
\begin{eqnarray}
\Theta_{\mbox{\tiny classic}}&=&\{\theta\in\RR^8\arrowvert V(\theta)>0\}\label{classic}\\
\Theta_{\mbox{\tiny quantum}}&=&\{\theta\in\RR^8\arrowvert V(\theta)+i \Omega\geq 0\}\label{quantum},
\end{eqnarray}
where $\Omega$ is the canonical symplectic \textit{real} $4\times 4$ matrix defined in \eqref{J}. 

The separable states have to respect not just the uncertainty relation $V(\theta)+i \Omega\geq 0$, but also the restriction $\widetilde{V}(\theta)+i\Omega\geq 0$, as stated in Theorem \ref{sepcor}, or equivalently
$V(\theta)+i\widetilde{\Omega}\geq 0$,
where $\widetilde{\Omega}=\Lambda_B \Omega \Lambda_B$ and $\Lambda_B$ is the partial transposition defined in \eqref{PPT2}.
Hence, the parameter space $\Theta$ representing Gaussian separable states is given by
\begin{equation}\label{separablebi}
\Theta_{\mbox{\tiny separable}}=\{\theta \in\RR^8\arrowvert V(\theta)+i \Omega\geq 0, V(\theta)+i\widetilde{\Omega}\geq 0 \}.
\end{equation}

As a consequence of Eqs. \eqref{quantum}, \eqref{separablebi} we also have that the Gaussian entangled states are represented by
\begin{equation}
\label{entangled}
\Theta_{\mbox{\tiny entangled}}=\Theta_{\mbox{\tiny quantum}}-\Theta_{\mbox{\tiny separable}}.
\end{equation}

Finally, to each set of states, classical, quantum, separable and entangled, we associate a Riemannian manifold given by ${\cal M}=\left(\Theta,G(\theta)\right)$, where $\Theta$ is specified by Eqs. \eqref{classic}, \eqref{quantum}, \eqref{separablebi}, \eqref{entangled}, respectively. Furthermore, $G(\theta)$ is the Fisher-Rao metric whose components $g_{\mu\nu}$ are given by \eqref{gvsV}, which is
an explicit expression in terms of the covariance matrix V.

However, the most general parametrization of a two-mode covariance matrix $V(\theta)$ is realized through its \textit{canonical form} and it only employs four parameters \cite{Simon},
\begin{equation}
\label{4par}
V(\theta)=\left(\begin{array}{cccc}
a&0&c&0\\
0&a&0&d\\
c&0&b&0\\
0&d&0&b
\end{array}\right)
\end{equation}
where, according to our notation, the only non zero parameters are $\theta^1=\theta^5=a\in\RR$, $\theta^8=\theta^{10}=b\in\RR$, $\theta^3=c\in\RR$ and $\theta^7=d\in\RR$. In this case, the domains of integration given by Eqs. \eqref{classic}, \eqref{quantum}, \eqref{separablebi}, apart from null sets, assume the following form:
\begin{eqnarray}
\label{Gaussian4parc}
\Theta_{\mbox{\tiny classic}}&=&\{(a,b,c,d)\in\RR^4\arrowvert\ a>0, b>0,\ -\sqrt{ab}<c<\sqrt{ab},\ -\sqrt{ab}<d<\sqrt{ab}\}\\\nonumber
\\
\label{Gaussian4parq}\Theta_{\mbox{\tiny quantum}}&=&\{(a,b,c,d)\in\RR^4\arrowvert\ a>1,\ 1<b<a,\ -\sqrt{c_1}<c<\sqrt{c_1},\ d_1\leq d\leq d_2\}\\\nonumber
&&\cup\ \{(a,b,c,d)\in\RR^4\arrowvert\ a>1,\ 1<a<b,\ -\sqrt{c_2}<c<\sqrt{c_2},\ d_1\leq d\leq d_2\} \\\nonumber
\\
\label{Gaussian4pars}\Theta_{\mbox{\tiny separable}}&=&\{(a,b,c,d)\in\RR^4\arrowvert\ a>1,\ b>1,\ -\sqrt{c_3}<c<0,\ d_1\leq d\leq -d_1 \}\\ \nonumber
&& \cup \ \{(a,b,c,d)\in\RR^4\arrowvert\ a>1,\ b>1,\ 0<c<\sqrt{c_3},\ -d_2\leq d\leq d_2 \},
\end{eqnarray}
with
\begin{eqnarray*}
&& c_1:=\frac{a}{b}\left(b^2-1\right),\ c_2:=\frac{b}{a}\left(a^2-1\right),\ c_3:=\frac{1-a^2-b^2+a^2b^2}{ab},\\
\\
&& d_1:=\frac{-c-\sqrt{\Delta}}{ab-c^2},\ d_2:=\frac{-c+\sqrt{\Delta}}{ab-c^2},\ \Delta:= c^2-(ab-c^2)\left[a b c^2 - (a^2 - 1)(b^2 - 1)\right].
\end{eqnarray*} 
Then, from the relation \eqref{gvsV} we can compute the Fisher-Rao metric 
$g(\theta)=g_{\mu\nu}d\theta^\mu\otimes d\theta^\nu$ ($\mu,\nu=1,\ldots,4$), whose components $g_{\mu\nu}$ 
result
\begin{eqnarray*}
&&g_{11} =\frac{b^2 (2a^2b^2 + c^4 + d^4 - 2 ab(c^2 + d^2)) }{2(ab-c^2)^2 (ab-d^2)^2},\quad
g_{12}=\frac{(a^2b^2 + c^2d^2)(c^2 + d^2)-4 a b c^2d^2}{2(ab-c^2)^2 (ab-d^2)^2},\\
\\
&& g_{13}=-\frac{bc}{(ab-c^2)^2}, \quad
g_{14}=-\frac{bd}{(ab-d^2)^2},\quad 
g_{22}=\frac{a^2 (2a^2b^2 + c^4 + d^4 - 2 ab(c^2 + d^2)) }{2(ab-c^2)^2 (ab-d^2)^2},\\
\\
&& g_{23}=-\frac{ac}{(ab-c^2)^2},\quad 
g_{24}=-\frac{ad}{(ab-d^2)^2},\quad 
g_{33}=\frac{ab+c^2}{(ab-c^2)^2},\quad 
g_{34}=0,\quad 
g_{44}=\frac{ab+d^2}{(ab-d^2)^2}.
\end{eqnarray*}

At this point we are able to compute the volumes of Gaussian states whether they are classic, quantum or entangled. First, it is evident from \eqref{Gaussian4parc},\eqref{Gaussian4parq} and \eqref{Gaussian4pars} that $\Theta_{\mbox{\tiny separable}}\subseteq\Theta_{\mbox{\tiny quantum}}\subseteq\Theta_{\mbox{\tiny classic}}$. 

Consider now the measure $\widetilde{\cal V}_{\varPhi}$ of Eq. \eqref{volenergyreg}; then, from Corollary \ref{corollary:integral-boundenergy} it follows that it is absolutely continuous with respect to the Lebesgue measure on $\RR ^4$. Also, it is monotonic. For all these reasons we have
\begin{align}\label{energyinc4}
\int_{\Theta_{\mbox{\tiny separable}}} \varPhi(V)\ \nu_G\leq \int_{\Theta{\mbox{\tiny quantum}}} \varPhi(V)\ \nu_G\leq \int_{\Theta{\mbox{\tiny classic}}} \varPhi(V)\ \nu_G,
\end{align}
for every $\mathbf{E}\in\RR^+$. Here, $\varPhi(V)=H(\mathbf{E}-\tr(V) )\ \log\left[1+(\det V)^4\right]=H(\mathbf{E}-2(a+b))\ \log\left[1+\left((ab-c^2)(ab-d^2)\right)^4\right].$

Given the sets of states as in \eqref{Gaussian4parc},\eqref{Gaussian4parq} and \eqref{Gaussian4pars}, after having obtained the set of entangled states by Eq.\eqref{entangled}, we can also compute the volume of the latter set by the measure $\widetilde{\cal V}_{\varPhi}$ of Eq.\eqref{volenergyreg}.  In Fig. \ref{RatioE} are reported the ratios of quantum over classical volumes, separable over classical, and entangled over classical. Figure \ref{RatioE} clearly shows the chain of inclusion 
$\widetilde{\cal V}_{\varPhi,{\rm entangled}}\subset \widetilde{\cal V}_{\varPhi,{\rm  separable}}\subset \widetilde{\cal V}_{\varPhi,{\rm quantum}}\subset \widetilde{\cal V}_{\varPhi,{\rm classical}}$ holding true for any value of $\mathbf{E}\in \RR^+$.  
{In addition all the volumes go to infinity when $\mathbf{E}\to\infty$, however this takes place at higher rate for ${\cal V}_{\varPhi,{\rm classical}}$, so that all the ratios approach zero when $\mathbf{E}\rightarrow+\infty$.
Analogously, for $\mathbf{E}\to 0$ all sets becomes empty and the ratios become zero.}

\begin{figure}[h!] \centering
\includegraphics[width=7cm,height=6cm,scale=1.1]{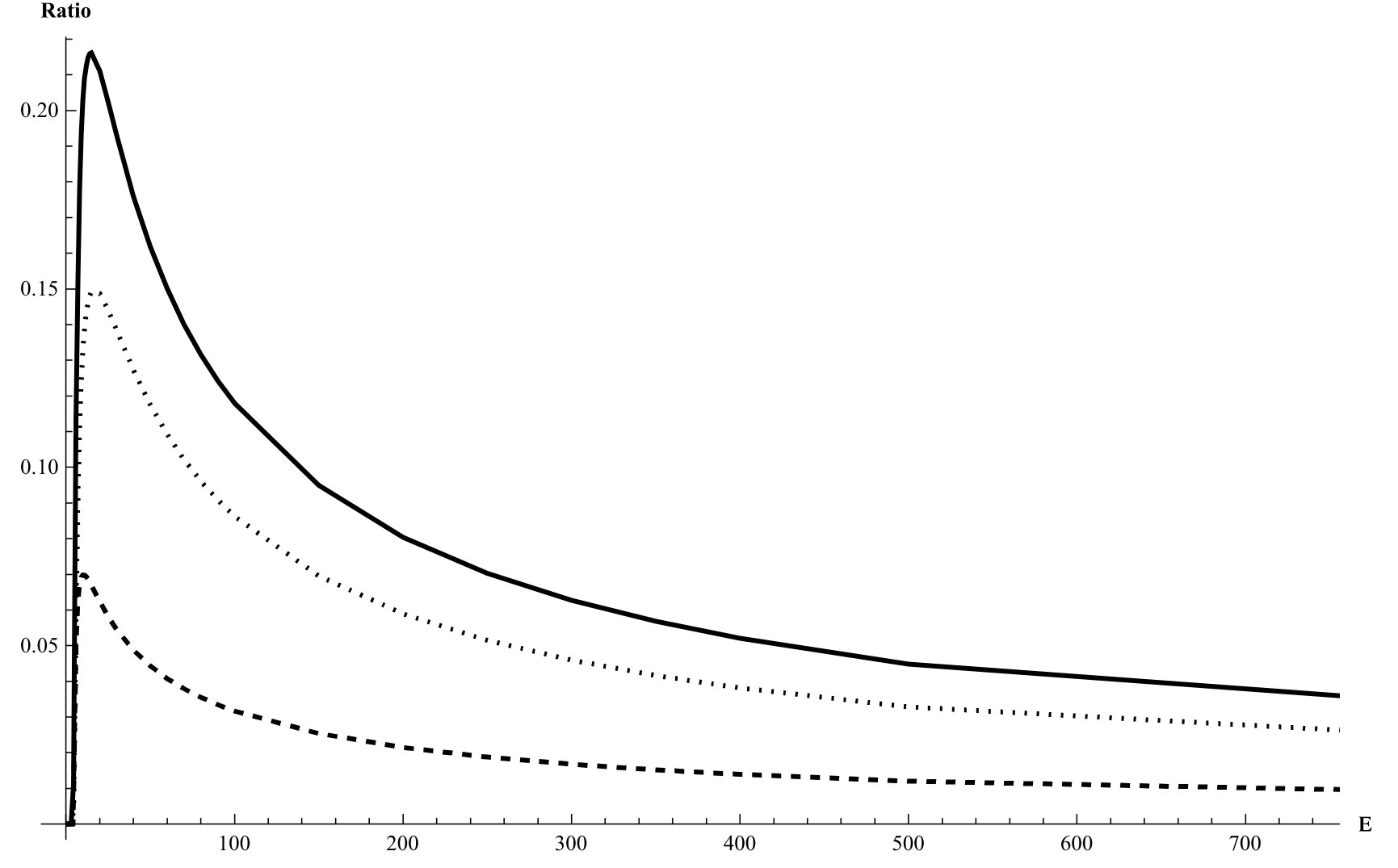}
\caption{Ratios of volumes $\widetilde{\cal V}_{\varPhi}$ of Eq.\eqref{volenergyreg} vs $\mathbf{E}$ (from top to bottom  quantum over classical, separable over classical, and entangled over classical).}
\label{RatioE}
\end{figure}

\begin{remark}
The form of the covariance matrix $V$ in \eqref{4par} is a very special one. Indeed, any {$4\times 4$} covariance matrix can be brought to that by a suitable transformation corresponding to some element of $\mbox{Sp}(2,\RR)\times \mbox{Sp}(2,\RR)$. 
\end{remark}

Unfortunately, the function $\varPhi$ just employed to get the inclusions \eqref{energyinc4} is not invariant under $\mbox{Sp}(2,\RR)\times \mbox{Sp}(2,\RR)$ transformations. 

On the other hand, the parameter space given in Eq. \eqref{separablebi}, which describes the Gaussian separable states of a two-mode system, has a $\mbox{Sp}(2,\RR)\otimes \mbox{Sp}(2,\RR)\subset \mbox{Sp}(4,\RR)$ invariant form. Hence, the function $\varPhi$ seems to be unsuitable to describe the volume of Gaussian states. In contrast, the regularizing function $\Upsilon $ of Eq.\eqref{reg} is invariant under local symplectic transformations, as it immediately follows from Prop. \ref{regpro}. For this reason, we propose the volume \eqref{volreg} as a suitable measure assessing differences among the Gaussian states. Again, we obtain strict inclusions among the volumes of classical, quantum, entangled and separable states, namely 
\begin{align}\label{energyinc}
\int_{\Theta_{\mbox{\tiny separable}}} \Upsilon(V)\ \nu_G\leq \int_{\Theta{\mbox{\tiny quantum}}}  \Upsilon(V)\ \nu_G\leq \int_{\Theta{\mbox{\tiny classic}}}  \Upsilon(V)\ \nu_G,
\end{align}
with $ \Upsilon(V)=e^{-\frac{1}{\kappa}\left(2 a^2 b + a (2 b^2 - c^2 - d^2) - b (c^2 + d^2)\right)}\ \log\left[1+\left((ab-c^2)(ab-d^2)\right)^4\right]$ and for all $\kappa\in\mathbb{R}_+$.
It is worth stressing that due to the regularization, the important quantities are not the absolute values of volumes, but rather their ratios.

In Fig. \ref{Ratio} are reported the ratios of the volumes of different sets of Gaussian states computed through the measure \eqref{volreg}. In particular the ratios of Quantum over Classic, Entangled over Classic and Separable over Classic are shown vs $\kappa$.

\begin{figure}[h!] \centering
\includegraphics[width=7cm,height=6cm,scale=1.1]{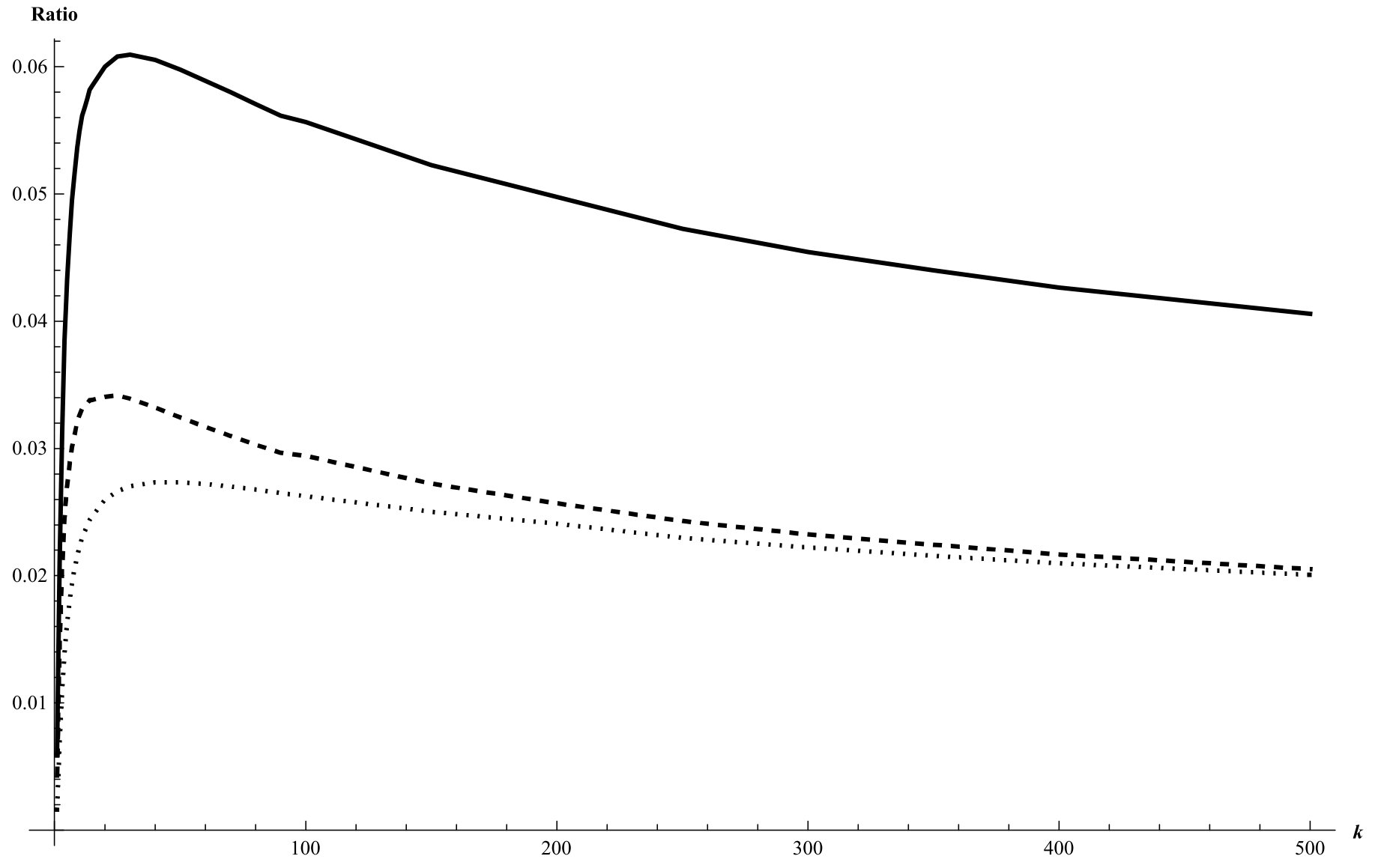}
\caption{Ratios of volumes $\widetilde{\cal V}_{\Upsilon}$ of Eq.\eqref{volreg} vs $\kappa$ (from top to bottom  quantum over classical, entangled over classical, and separable over classical.}
\label{Ratio}
\end{figure}

Figure \ref{Ratio} clearly shows the chain of inclusion 
$\widetilde{\cal V}_{\Upsilon,{\rm separable}}\subset \widetilde{\cal V}_{\Upsilon,{\rm entangled}}\subset \widetilde{\cal V}_{\Upsilon,{\rm quantum}}\subset \widetilde{\cal V}_{\Upsilon,{\rm classical}}$ holding true for any value of $\kappa$. For $\kappa\to 0$ all sets becomes empty and the ratios become zero, while for $\kappa\to\infty$ they tend to asymptotic values.

Both Figs. \ref{RatioE} and \ref{Ratio} put forward a non-monotonic behavior of the volume ratios. This effect turns out to be of purely geometric nature and has to be ascribed to the curved metric \eqref{gFR} (by contrast one can check that with standard Euclidean metric the behavior is monotonic).
By comparing Fig. \ref{RatioE} and Fig. \ref{Ratio} it is worth noticing the different hierarchies of volumes appearing there. This fact can be explained as follows. First we recall that a symplectic transformation $S$ acts by congruence on a covariance matrix $V\rightarrow SVS^T$, hence the set of Gaussian states can be thought as the orbit for the action of ${\rm Sp}(2N,\mathbb{R})$ on a seed $V$.
Then, the regularization \eqref{reg} provides a ``homogeneous'' cut-off on the space of Gaussian states and as consequence a hierarchy of volumes similar to that of finite (low) dimensional case is obtained (in Ref. \cite{Zycz} it was shown that the volume of separable states is contained in the volume of entangles states, which in turn is contained in the volume of quantum states).
In contrast, the regularization \eqref{regenergy} provides a non ``homogeneous'' cut-off on the space of Gaussian states. In fact it only cuts the states arising from the action of elements of the non compact subgroup of $\mbox{Sp}(2N,\RR)$, leaving unaffected states arising from the action of the compact subgroup of $\mbox{Sp}(2N,\RR)$ (actually $\mbox{Sp}(2N,\RR)\cap \mbox{SO}(2N,\RR) $). Consequently the hierarchy of volumes obtained differs from that of the finite (low) dimensional case.


\section{Conclusion}\label{sec6}

\noindent In the present work, we tackled the problem of evaluating the volume of Gaussian physical states, both classical and quantum. The relevance of considering Gaussian states is twofold: first, Gaussian states are the most commonly experimentally used CV states. Second, Gaussian quantum states are represented in the phase space picture of quantum mechanics as proper pdf$s$. Hence, Gaussian classical states are pdf$s$ in phase space and Gaussian quantum states are pdf$s$ coming from Wigner functions in phase space. Thereby, dealing with pdf$s$, Information Geometry appears as a natural and unifying approach for evaluating volume of classical and quantum states. 

By exploiting methods of Information Geometry, we associated manifolds to different sets of states; thus there is one manifold corresponding to classical states, one manifold to quantum states, another one to separable and another to entangled states. The key point in setting up such connections was that each set of states can be obtained by considering the pdf$s$ parametrized by the entries of the covariance matrix. Then the manifolds are exactly the parameter spaces obtained by imposing constraints on the covariance matrix in order to describe classical or quantum states. Concerning separable and entangled states, the question is more delicate. Indeed, there is no general criterion to characterize multipartite entangled states. Nonetheless, by reducing to bipartite systems one could use the condition \eqref{separablestates} (which turns into an operational condition for two-mode systems thanks to \eqref{separable}). Then, in this case we can also associate a manifold to separable states and a manifold to entangled states.
Next we endowed each of these manifolds with a Riemannian metric, the Fisher-Rao one. Thus it was natural to define the volume of a set of Gaussian states as the volume \eqref{volume} of the Riemannian manifold associated to it. 

Thanks to Eq. \eqref{gvsV}, we were able to show that the volume measure introduced in \eqref{volume} is invariant under labelling permutations of modes; moreover, we proved that it has a symplectic invariant form. These results showed that the volume measure in \eqref{volume} is suitable for estimating the volume of Gaussian states. 
However, since we analysed infinite dimensional systems, problems arose from the non-compactness of the support of the states. We overcome this difficulty first of all by resorting to an energy constraint. Hence we defined the regularizing function \eqref{regenergy}. However, such kind of regularization turns out to be not invariant under symplectic congruence. Then we introduced a different regularizing function, namely the one in \eqref{reg}, which came about by figuring out the functional relation given by Eq. \eqref{gvsDetV}. We proved that also such a function has permutation and symplectic invariant form (Prop. \ref{regpro}).

Accordingly with these regularizing functions we have explicitly evaluated the volume of two-mode Gaussian states. 
Note that it is not the values of the volumes per s\'e that are really relevant, but rather the ratios between the volumes of the various sets. As such we presented in Figs. \ref{RatioE} and \ref{Ratio} the ratios with respect to the volume of classical states. The Figures show different hierarchies of volumes with the one in Fig. \ref{Ratio} resembling  that of finite dimensional systems (at least for quantum states, see e.g. \cite{Zycz}).
In both cases the ratios depend on the cut-off parameter in a non-monotonic way due to geometric effects. This makes evident a rich structure for sets of Gaussian states.

Finally, the presented volume measure could also be applied to three-mode systems, for which an operational criterion to distinguish separable states among all the quantum states is well-known \cite{Giedke}. Indeed, in such way the parameter space in Def. \ref{defiphysical} can be implemented for separable states as well as for classical and quantum states. Thereby, the volumes can be computed. Beyond that,  a necessary and sufficient criterion to describe Gaussian separable states still lacks, hence the introduced volume measure can only be useful to provide bounds on the volume of sets of multipartite states.

\acknowledgments

\noindent Part of this work was done while DF was visiting the Department of Physics of the Czech Technical University in Prague. DF warmly thanks Igor Jex and his collaborators for useful discussions as well as for kind hospitality. DF and SM also acknowledge fruitful discussions with Federico Maiolini.


\appendix

\section{Proof of Theorem \ref{main}}\label{appendixA}

\noindent Let us notice that because of the form of $P(\xi;\theta)$ in \eqref{statmodel}, the expression in Eq. \eqref{gFR} involves a Gaussian integral. However, before evaluating it, let us study the function
\begin{equation}
\label{fmunu}
f_{\mu\nu}(\xi):=\partial_\mu\log  P(\xi;\theta)\partial_\nu\log P(\xi;\theta).
\end{equation}
By means of logarithm's properties we can write
\begin{equation}
\log[P(\xi;\theta)]=-\frac 1 2\Bigg[\log[(2\pi)^N\det V(\theta)]+\sum_{\alpha,\beta=1}^{2N} V_{\alpha\beta}^{-1}(\theta)\xi_\alpha \xi_\beta\Bigg],
\label{log}
\end{equation}
where $V_{\alpha\beta}^{-1}(\theta)$ is the entry $\alpha\beta$ of the inverse of the covariance matrix $V(\theta)$. Then the derivative $\partial_\mu$ of Eq. \eqref{log} reads
\begin{equation}
\label{partmu1}
\partial_\mu \log[P(\xi;\theta)]=-\frac 1 2\Bigg[\frac{\partial_\mu(\det V)}{\det V}+\sum_{\alpha,\beta=1}^{2N} \partial_\mu(V_{\alpha\beta}^{-1})\xi_\alpha \xi_\beta\Bigg]. 
\end{equation}
Recall that the following relation holds
\begin{equation}
\label{partmu2}
\partial_\mu(\det V(\theta))=\det V(\theta) \;\tr\left[V^{-1}(\theta)\,\partial_\mu(V(\theta))\right].
\end{equation}
Hence, using \eqref{partmu1} and \eqref{partmu2}, we arrive at
\begin{eqnarray}
f_{\mu\nu}(\xi)&=&
\frac{1}{4} \Bigg[\tr \left[V^{-1}(\theta)\,\partial_\mu(V(\theta))\right]+\sum_{\alpha,\beta=1}^{2N}\partial_\mu \left(V_{\alpha\beta}^{-1}(\theta)\right)\xi_\alpha\xi_\beta\Bigg]\nonumber\\
&&\times \Bigg[\tr \left[V^{-1}(\theta)\,\partial_\nu(V(\theta))\right]+\sum_{\alpha,\beta=1}^{2N}\partial_\nu \left(V_{\alpha\beta}^{-1}(\theta)\right)\xi_\alpha\xi_\beta\Bigg].
\label{f}
\end{eqnarray}

For {an analytic} function $f(\xi)$ and a symmetric definite-positive $2 N\times 2 N$ matrix
$A$ it results
\begin{equation}
\int d\xi f(\xi)e^{\Big[-\frac{1}{2}\sum_{i,j=1}^{2N}A_{ij}\xi_i\xi_j\Big]}=\sqrt{\frac{(2\pi)^{2 N}}{\det A}}\exp\left[\frac 1 2 \sum_{i,j=1}^{2 N} 
A_{ij}^{-1}\frac{\partial}{\partial \xi_i}\frac{\partial}{\partial \xi_j}\right]f |_{\xi=0},
\label{IG}
\end{equation}
where $A_{ij}^{-1}$ is the entry $ij$ of the inverse of the matrix $A$ and the exponential means the power series over its argument (the differential operator). {Indeed, by expanding $f(\xi)$ we have that $f(\xi)=\sum_{\alpha}\frac{{\cal D}^\alpha f(0)}{\alpha!}\xi^\alpha=e^{\Sigma\xi}$, where $\Sigma=({\cal D} f)(0)$, with $\alpha$ a multi-index and ${\cal D}$ the differential operator with respect local coordinates $\xi_1,\ldots\xi_N$. At this point the 
left hand side of Eq.\eqref{IG} can be written as
\begin{align*}
\int d\xi e^{\Big[-\frac{1}{2}\sum_{i,j=1}^{2N}A_{ij}\xi_i\xi_j\Big]+\Sigma\xi}.
\end{align*}
Then by performing an orthogonal transformation ${\cal O}$, we have 
\begin{align*}
-\frac{1}{2} \xi^T A\xi+\Phi\xi\rightarrow -\frac{1}{2} y^T D y+\Sigma{\cal O}y,
\end{align*}
where $D$ is diagonal matrix with elements eigenvalues of $A$. Finally, through some algebras, we arrive at
\begin{align*}
\int d\xi e^{\Big[-\frac{1}{2}\sum_{i,j=1}^{2N}A_{ij}\xi_i\xi_j\Big]+\Sigma\xi}=\sqrt{\frac{(2\pi)^{2N}}{\det A}}\ e^{-\frac{1}{2}\Sigma^T A^{-1}\Sigma},
\end{align*}
which gives exactly Eq.\eqref{IG}. }

Inserting the expression of $P(\xi;\theta)$ in \eqref{statmodel} into the relation \eqref{gFR} and employing Eq.\eqref{IG} we find
\begin{equation}
\frac{1}{\sqrt{(2\pi)^{2N}\det V}}\int d\xi f_{\mu\nu}(\xi) e^{\left[-\frac 1 2 \xi^T V^{-1} \xi\right]}
=\exp\left[\frac 1 2 \sum_{i,j=1}^n 
V_{ij}\frac{\partial}{\partial \xi_i}\frac{\partial}{\partial \xi_j}\right]f_{\mu\nu} |_{\xi=0}.
\label{Gint}
\end{equation}

We are now going to evaluate the Gaussian integrals in Eq.\eqref{gFR} by means of the following Lemma.
\begin{lemma}\label{lemma1}
Let be $D:=\frac 1 2 \sum_{i,j=1}^{2N}V_{ij}\frac{\partial}{\partial \xi_i}\frac{\partial}{\partial \xi_j}$, expanding the right-hand side of \eqref{Gint} we have 
\begin{equation}
g_{\mu\nu}(\theta)=f_{\mu\nu}(0)+Df_{\mu\nu} |_{\xi=0}+\frac 1 2 D^2f_{\mu\nu} |_{\xi=0},
\label{gexp}
\end{equation}
with
\begin{eqnarray}
Df_{\mu\nu}|_{\xi=0}&=&\frac{1}{4}\tr \left[V^{-1}(\theta)\,\partial_\mu(V(\theta))\right]\tr \left[V(\theta)\,\partial_\nu(V^{-1}(\theta))\right]\nonumber\\
&&+\frac{1}{4}\tr \left[V(\theta)\,\partial_\mu(V^{-1}(\theta))\right]\tr \left[V^{-1}(\theta)\,\partial_\nu(V(\theta))\right],
\label{D}
\end{eqnarray}
and
\begin{eqnarray}
\frac 1 2 D^2f_{\mu\nu}|_{\xi=0}&=&\frac{1}{4}\tr \left[V(\theta)\,\partial_\mu(V^{-1}(\theta))\right]\tr \left[V(\theta)\,\partial_\nu(V^{-1}(\theta))\right]\nonumber\\
&&+\frac{1}{2}\tr \left[V(\theta)\partial_\mu(V^{-1}(\theta))V(\theta)\partial_\nu(V^{-1}(\theta))\right].
\label{D2}
\end{eqnarray}
\end{lemma}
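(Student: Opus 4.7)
The plan is to apply the Gaussian integral identity \eqref{IG} directly to $f_{\mu\nu}(\xi)$. The crucial observation is that, as written in \eqref{f}, $f_{\mu\nu}$ is a product of two polynomials each of degree at most two in $\xi$, hence a polynomial of total degree at most four. Because the operator $D = \tfrac12 \sum_{ij} V_{ij}\,\partial_{\xi_i}\partial_{\xi_j}$ lowers polynomial degree by two at each application, all terms $D^k/k!$ with $k\geq 3$ in the formal series $e^D$ annihilate $f_{\mu\nu}$. Combined with the prefactor in \eqref{IG} absorbing the Gaussian normalization, this immediately gives \eqref{gexp}.

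To compute the individual contributions I would introduce the shorthands $A_\mu := \tr[V^{-1}\partial_\mu V]$ and $B_\mu(\xi) := \sum_{\alpha\beta}(\partial_\mu V^{-1})_{\alpha\beta}\,\xi_\alpha\xi_\beta$, so that $f_{\mu\nu} = \tfrac14 (A_\mu + B_\mu)(A_\nu + B_\nu)$ with $A_\mu$ constant and $B_\mu$ purely quadratic. Clearly $f_{\mu\nu}(0) = \tfrac14 A_\mu A_\nu$. For $Df_{\mu\nu}|_{\xi=0}$ only the cross terms $\tfrac14(A_\mu B_\nu + A_\nu B_\mu)$ survive: the pure constant $A_\mu A_\nu$ is killed by $D$, while the pure quartic $B_\mu B_\nu$ becomes quadratic after a single $D$ and then vanishes at $0$. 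The key one-line computation is
\begin{equation*}
DB_\mu\big|_{\xi=0} = \sum_{\alpha\beta}(\partial_\mu V^{-1})_{\alpha\beta}\, V_{\alpha\beta} = \tr[V\,\partial_\mu V^{-1}],
\end{equation*}
which follows from $\partial_{\xi_i}\partial_{\xi_j}(\xi_\alpha\xi_\beta) = \delta_{i\alpha}\delta_{j\beta} + \delta_{i\beta}\delta_{j\alpha}$ together with the symmetry of $V$. Substituting back gives \eqref{D}.

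For $\tfrac12 D^2 f_{\mu\nu}|_{\xi=0}$ only the quartic piece $\tfrac14 B_\mu B_\nu$ contributes, since $D^2$ annihilates both the constant and the quadratic parts. Writing $M := \partial_\mu V^{-1}$ and $N := \partial_\nu V^{-1}$, I would evaluate
\begin{equation*}
\tfrac12 D^2(B_\mu B_\nu)\big|_{\xi=0} = \sum_{\alpha\beta\gamma\delta} M_{\alpha\beta}N_{\gamma\delta}\bigl(V_{\alpha\beta}V_{\gamma\delta} + V_{\alpha\gamma}V_{\beta\delta} + V_{\alpha\delta}V_{\beta\gamma}\bigr),
\end{equation*}
recognizing the bracket as Isserlis' formula for the fourth Gaussian moment $\langle\xi_\alpha\xi_\beta\xi_\gamma\xi_\delta\rangle$ applied to the coefficient tensor of the quartic. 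The first contraction factors as $\tr(VM)\tr(VN)$; using the symmetry of $V, M, N$ and cyclicity of the trace, the second and third contractions each collapse to $\tr(VMVN)$, which gives the factor of two in \eqref{D2}.

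The main obstacle is precisely this last step: correctly bookkeeping the three Wick contractions and recognizing that the two non-factorized ones reduce to the same trace $\tr[V(\partial_\mu V^{-1})V(\partial_\nu V^{-1})]$, so that the coefficient in front of the second term of \eqref{D2} comes out to $\tfrac12$ after combining with the $\tfrac14$ from the prefactor of $f_{\mu\nu}$. The remaining ingredients, namely the degree-truncation of $e^D$ and the evaluation of $D$ on a single quadratic, are essentially mechanical.
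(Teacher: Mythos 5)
Your proposal is correct and follows essentially the same route as the paper's proof: both expand $f_{\mu\nu}$ into a product of (constant $+$ quadratic) pieces, truncate $e^{D}$ at second order by degree counting, and evaluate the surviving contractions at $\xi=0$. The only cosmetic difference is that you package the fourth-derivative combinatorics via Isserlis' formula, whereas the paper writes out the six index pairings explicitly; the resulting coefficients $\tfrac14$ and $\tfrac12$ in \eqref{D} and \eqref{D2} agree.
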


{\bf Proof.}\quad From Eq. \eqref{f}, with $i,j\in\{1,\ldots,2N\}$, by a straightforward calculation we have
\begin{eqnarray*}
\frac{\partial}{\partial \xi_i}\Bigg(\frac{\partial f_{\mu\nu}}{\partial \xi_j}\Bigg)(\xi)&=&\frac{1}{2}\partial_\mu V_{ij}^{-1}\left(\tr \left[V^{-1}\,\partial_\nu V\right]+\sum_{\alpha,\beta=1}^{2N}\partial_\nu \left(V_{\alpha\beta}^{-1}\right)\xi_\alpha\xi_\beta\right)\\
&&+ \left(\sum_{\beta=1}^{2N}\partial_\mu\left(V_{i\beta}^{-1}\right)\xi_\beta\right)\left(\sum_{\beta=1}^{2N}\partial_\nu\left(V_{j\beta}^{-1}\right)\xi_\beta\right)\\
&&+\left(\sum_{\beta=1}^{2N}\partial_\mu\left(V_{j\beta}^{-1}\right)\xi_\beta\right)\left(\sum_{\beta=1}^{2N}\partial_\nu\left(V_{i\beta}^{-1}\right)\xi_\beta\right)\\
&&+\frac{1}{2}\partial_\nu V_{ij}^{-1}\left(\tr \left[V^{-1}\,\partial_\mu V\right]+\sum_{\alpha,\beta=1}^{2N}\partial_\mu \left(V_{\alpha\beta}^{-1}\right)\xi_\alpha\xi_\beta\right).
\end{eqnarray*}
Taking the sum over $i,j\in\{1,\ldots,2N\}$ and evaluating the above expression at $\xi=0$, we obtain
\begin{eqnarray*}
Df_{\mu\nu}|_{\xi=0}&=&\frac{1}{4}\sum_{i,j=1}^{2N}V_{ij}\partial_\mu V_{ij}^{-1}\tr \left[V^{-1}\,\partial_\nu V\right]\\
&&+\frac{1}{4}\sum_{i,j=1}^{2N}V_{ij}\partial_\nu V_{ij}^{-1}\tr \left[V^{-1}\,\partial_\mu V\right].
\end{eqnarray*}
Now, recall that 
\begin{enumerate}
\item $\tr \left[AB\right]=\sum_{i,j=1}^{2N}A_{ij}B_{ij}$, for any pair of $N\times N$ matrices $A,B$;
\item $\partial_\mu(V(\theta))=\Big[\frac{\partial V_{ij}}{\partial\theta_\mu}\Big]_{ij}$ for any matrix $V$.
\end{enumerate}
Hence, we get Eq. \eqref{D}.

Furthermore, letting $i,j,h,k\in\{1,\ldots,2N\}$, we have 
\begin{eqnarray*}
\frac{\partial}{\partial \xi_h}\Bigg(\frac{\partial}{\partial \xi_k}\frac{\partial}{\partial \xi_i}\frac{\partial }{\partial \xi_j}f_{\mu\nu}\Bigg)(\xi)&=&\partial_\mu V_{ij}^{-1}\ \partial_\nu V_{hk}^{-1}+\partial_\mu V_{ih}^{-1}\ \partial_\nu V_{jk}^{-1}\nonumber\\
&&+\partial_\mu V_{ik}^{-1}\ \partial_\nu V_{jh}^{-1}+\partial_\mu V_{jh}^{-1}\ \partial_\nu V_{ik}^{-1}\nonumber\\
&&+\partial_\mu V_{jk}^{-1}\ \partial_\nu V_{ih}^{-1}+\partial_\mu V_{hk}^{-1}\ \partial_\nu V_{ij}^{-1}.
\label{derivata}
\end{eqnarray*}
Taking the sum over $i,j,h,k\in\{1,\ldots,2N\}$ we obtain
\begin{eqnarray*}
\frac 1 2 D^2f_{\mu\nu}(\xi)&=&\frac{1}{8} \Bigg\{\sum_{i,j,h,k}V_{ij}V_{hk}\ \partial_\mu V_{ij}^{-1}\ \partial_\nu V_{hk}^{-1}+\sum_{i,j,h,k}V_{ij}V_{hk}\ \partial_\mu V_{ih}^{-1}\ \partial_\nu V_{jk}^{-1}\\
&&+\sum_{i,j,h,k}V_{ij}V_{hk}\ \partial_\mu V_{ik}^{-1}\ \partial_\nu V_{jh}^{-1}+\sum_{i,j,h,k}V_{ij}V_{hk}\ \partial_\mu V_{jh}^{-1}\ \partial_\nu V_{ik}^{-1}\\
&&+\sum_{i,j,h,k}V_{ij}V_{hk}\ \partial_\mu V_{jk}^{-1}\ \partial_\nu V_{ih}^{-1}+\sum_{i,j,h,k}V_{ij}V_{hk}\ \partial_\mu V_{hk}^{-1}\ \partial_\nu V_{ij}^{-1} \Bigg\}\\
&=&\frac{1}{8} \Bigg\{2\tr \left[V\partial_\mu V^{-1}\right]\tr \left[V\partial_\nu V^{-1}\right]+4\tr \left[V\ \partial_\mu\ V^{-1}\ V\ \partial_\nu V^{-1}\right]\Bigg\}.
\end{eqnarray*}
Finally, thanks to the above expression of $\frac 1 2 D^2f_{\mu\nu}(\xi)$, we have that the expansion in the 
right-hand side of  Eq.\eqref{Gint} only contains terms up to the second order. $\hfill\Box$

\bigskip

At this point, collecting the results in Lemma \ref{lemma1} together Eq. \eqref{f} evaluated in $\xi=0$, we obtain
\begin{eqnarray}
g_{\mu\nu}&=&\frac 1 4 
\left[ \tr\left(V^{-1}\ \partial_\mu V\right)
+ \tr\left(V \partial_\mu V^{-1}\right) \right]
\left[ \tr\left(V^{-1}\ \partial_\nu V\right)
+ \tr\left(V \partial_\nu V^{-1}\right) \right]
\nonumber\\
&+&\frac{1}{2}\tr \left(V\ \partial_\mu\ V^{-1}\ V\ \partial_\nu V^{-1}\right).
\end{eqnarray}
Then, the statement of  Theorem \ref{main} easily follows from relation $\partial_\mu V^{-1}=-V^{-1}(\partial\mu V) V^{-1}$ .$\hfill\Box$


\section{Proof of Proposition \ref{invariance}}\label{appendixB}

\noindent Let us consider the permutation $\sigma:\{1,\ldots,N\}\rightarrow\{1,\ldots,N\}$ and the corresponding permutation matrix $\Pi$ which entails a labelling permutation in the phase space $\Gamma$, i.e $\Pi:\xi\in\Gamma\rightarrow\xi_\sigma\in\Gamma$. From the $P(\xi;\theta)$ in \eqref{statmodel},  it follows that such a permutation acts on the covariance matrix $V$ in the following manner:
\begin{equation*}
V\rightarrow \Pi\ V\ \Pi^T.
\end{equation*}
So, let $V^\prime(\theta)$ and $V(\theta)$ be two parametrized covariance matrices and $\Pi$ a permutation matrix such that $V^\prime(\theta) =\Pi\ V(\theta)\ \Pi^T$. Let $\Theta$ and $\Theta^\prime$ be the parameter spaces corresponding to $V(\theta)$ and $V^\prime(\theta)$, respectively. Then there exists a diffeomorphism $\varphi:\Theta\rightarrow\Theta^\prime$ with Jacobian $J_\varphi$ such that $\arrowvert\det J_\varphi\arrowvert=1$. Therefore we have
\begin{eqnarray*}
{\cal V}(V^\prime)=\int_{\Theta^\prime} d\theta\ \sqrt{\det g^\prime(\theta)}
=\int_{\Theta} d\theta\ \sqrt{\det g(\theta)}={\cal V}(V),
\end{eqnarray*}
where we used the equality $\det g^\prime(\theta)=\det g(\theta)$ intending $g^\prime(\theta)$ as the Fisher-Rao information matrix corresponding to $V^\prime(\theta)$.  

Actually, showing that $\det g^\prime(\theta)=\det g(\theta)$ we are proving a stronger relation between the Fisher-Rao metrics corresponding to $V$ and $V^\prime$. In fact the following relation holds true,
\begin{eqnarray*}
\mbox{tr}\left[V^\prime\ \partial_\mu\ (V^\prime)^{-1}\ V^\prime\ \partial_\nu (V^\prime)^{-1}\right]&=&\mbox{tr}\left[\Pi V\Pi^T\ \partial_\mu\ (\Pi V\Pi^T)^{-1}\ \Pi V\Pi^T\ \partial_\nu (\Pi V\Pi^T)^{-1}\right]\nonumber\\
&=&\mbox{tr}\left[V\ \partial_\mu\ V^{-1}\ V\ \partial_\nu V^{-1}\right],
\end{eqnarray*}
where we used the independence of $\Pi$  from $\theta$ and the invariance of the trace under cyclic permutation.

Then, recalling the relation \eqref{gvsV}, we arrive at $g_{\mu\nu}^\prime=g_{\mu\nu}$, for every $\mu,\nu\in\{1,\ldots,m\}$. Here, $g_{\mu\nu}^\prime$ denotes the component of the metric corresponding to $V^\prime$. Thereby, $\det g^\prime(\theta)=\det g(\theta)$ trivially holds true.

\bigskip

Focusing on the quantum states, it is well-known that the uncertainty relation $V+i\Omega\geq 0$ is invariant under symplectic transformation \cite{Simon}. So, let us consider two parametrized covariance matrices $V^\prime(\theta)$ and $V(\theta)$ and a symplectic matrix $S$ such that $V^\prime(\theta) =S\ V(\theta)\ S^T$. Then, the parameter spaces $\Theta_{\mbox{\tiny quantum}}^\prime$ and $\Theta_{\mbox{\tiny quantum}}$, corresponding to those different matrices, coincides.

Furthermore, we have 
\begin{eqnarray*}
\mbox{tr}\left[V^\prime\ \partial_\mu\ (V^\prime)^{-1}\ V^\prime\ \partial_\nu (V^\prime)^{-1}\right]&=&\mbox{tr}\left[SVS^T\ \partial_\mu\ (SVS^T)^{-1}\ SVS^T\ \partial_\nu (SVS^T)^{-1}\right]\nonumber\\
&=&\mbox{tr}\left[V\ \partial_\mu\ V^{-1}\ V\ \partial_\nu V^{-1}\right],
\end{eqnarray*}
where we used the independence of $S$ from $\theta$ and the invariance of the trace under cyclic permutation. Thus, from \eqref{gvsV} it trivially follows that,
\begin{eqnarray*}
{\cal V}(V^\prime)=\int_{\Theta_{\mbox{\tiny quantum}}^\prime} d\theta\ \sqrt{\det g^\prime(\theta)}
=\int_{\Theta_{\mbox{\tiny quantum}}} d\theta\ \sqrt{\det g(\theta)}
={\cal V}(V),
\end{eqnarray*}
where $g^\prime(\theta)$ and $g(\theta)$ denote the Fisher-Rao information matrix corresponding to the covariance matrices $V^\prime$ and $V$, respectively.
$\hfill\Box$


\section{Proof of Theorem \ref{theorem:det-bound} and Corollary \ref{corollary:integral-boundenergy}}\label{appendixC}

The proof of Theorem \ref{theorem:det-bound} relies on an inequality of 
the determinants of Gram matrices \cite{Yamada:2013}.
We recall that for an inner product space $\H$ and a set of points $\{x_1, \ldots, x_n\}$ in $\H$, the Gram matrix of this set is the $n \times n$ matrix defined by $G(x_1, \ldots, x_n) = (\la x_i, x_j\ra)_{i,j=1}^n$.
 
\begin{lemma}\cite{Yamada:2013}
\label{lemma:Gram-inequality}
Let $\H_1, \H_2$ be two inner product spaces and $T:\H_1 \mapto \H_2$ be a bounded linear operator. Let $\{x_1, \ldots, x_n\} \in \H_1$ be an arbitrary set. Then
\begin{align}
\det G(Tx_1, \ldots, Tx_n) \leq ||T||^{2n}\det G(x_1, \ldots, x_n),
\end{align}
where $||T||$ denote the operator norm of $T$.
\end{lemma}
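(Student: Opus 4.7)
The plan is to view the inequality as the natural statement that the linear map $T$ cannot stretch an $n$-dimensional parallelepiped by more than a factor of $\|T\|^n$, because the singular values of $T$ restricted to any $n$-dimensional subspace are each bounded above by $\|T\|$. I would carry this out via an explicit orthonormal reduction that avoids invoking exterior-algebra machinery.

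First I would dispose of the degenerate case: if $\{x_1,\ldots,x_n\}$ is linearly dependent, so is $\{Tx_1,\ldots,Tx_n\}$, both Gram determinants vanish, and the inequality holds trivially. So assume the $x_i$ are linearly independent and span an $n$-dimensional subspace $V_0 \subset \H_1$. Apply Gram-Schmidt to produce an orthonormal basis $e_1,\ldots,e_n$ of $V_0$ together with an invertible upper-triangular matrix $R=[R_{ji}]$ such that $x_i = \sum_j R_{ji}\, e_j$. Bilinearity of the inner products then gives
\begin{align*}
G(x_1,\ldots,x_n) = R^T R, \qquad G(Tx_1,\ldots,Tx_n) = R^T\, G(Te_1,\ldots,Te_n)\, R,
\end{align*}
and taking determinants reduces the target inequality to the orthonormal case, namely $\det G(Te_1,\ldots,Te_n)\leq \|T\|^{2n}$.

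To close this reduced case, interpret $E=[e_1|\cdots|e_n]$ as a linear isometry $\RR^n \hookrightarrow \H_1$, so that the $n\times n$ matrix $G(Te_1,\ldots,Te_n)$ equals $(TE)^*(TE)$. Its eigenvalues are the squared singular values $\sigma_1^2,\ldots,\sigma_n^2$ of the bounded operator $TE:\RR^n\to\H_2$. Since $\|TE\|\leq \|T\|\cdot\|E\|=\|T\|$, each $\sigma_i\leq \|T\|$, and therefore $\det G(Te_1,\ldots,Te_n)=\prod_{i=1}^n \sigma_i^2\leq \|T\|^{2n}$, closing the chain of inequalities.

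The main obstacle, to the extent there is one, is purely bookkeeping: keeping the factorization $G(Tx)=R^T G(Te) R$ clean and checking that nothing depends on the ambient dimensions of $\H_1$ or $\H_2$. Since the whole computation takes place on the finite-dimensional subspace $V_0$ and on its image under $T$ in $\H_2$, no genuine functional-analytic subtleties arise beyond the standard fact that restricting a bounded operator to a closed subspace cannot increase its operator norm.
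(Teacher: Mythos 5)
Your proof is correct. Note, however, that the paper does not prove this lemma at all: it is imported verbatim from the cited reference \cite{Yamada:2013} and used as a black box in Appendix C, so there is no in-paper argument to compare against. Your derivation --- disposing of the linearly dependent case, factoring $G(x_1,\ldots,x_n)=R^TR$ and $G(Tx_1,\ldots,Tx_n)=R^T G(Te_1,\ldots,Te_n)R$ via Gram--Schmidt on the span $V_0$, and then bounding $\det G(Te_1,\ldots,Te_n)=\det\bigl((TE)^*(TE)\bigr)=\prod_i\sigma_i^2\leq \|T\|^{2n}$ using $\|TE\|\leq\|T\|$ --- is a complete, elementary, self-contained proof that works in arbitrary (possibly infinite-dimensional) inner product spaces, which is exactly the generality the statement requires. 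The only cosmetic caveat is that your transposes $R^T$ implicitly assume real scalars (over $\CC$ one would write $R^*R$ and $|\det R|^2$), but since the paper works exclusively with real covariance matrices and real Hilbert spaces, this is immaterial here.
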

\begin{proof}[\textbf{Proof of Theorem \ref{theorem:det-bound}}]
Since $V$ is symmetric, positive definite, its adjunct, given by $\adj(V) = \det(V)V^{-1}$ is also symmetric, positive definite.  
We have
\begin{align*}
g_{\mu \nu} &= \frac{1}{2}\trace[V^{-1}(\partial_{\mu}V)V^{-1}(\partial_{\nu}V)]
= \frac{1}{2}\frac{1}{(\det V)^2}\trace[\adj(V)(\partial_{\mu}V)\adj(V)(\partial_{\nu}V)]
\\
&= \frac{1}{(\det V)^2}\tilde{g}_{\mu \nu},
\end{align*}
where the matrix $(\tilde{g}_{\mu, \nu})_{\mu, \nu =1}^m$ is given by
\begin{align*}
\tilde{g}_{\mu \nu} &=\frac{1}{2}\trace[\adj(V)(\partial_{\mu}V)\adj(V)(\partial_{\nu}V)]
\\
&= \frac{1}{2}\trace[(\adj(V)^{1/2}(\partial_{\mu}V)\adj(V)^{1/2})(\adj(V)^{1/2}(\partial_{\nu}V)\adj(V)^{1/2})]
\\
& = \frac{1}{2}\la (\adj(V)^{1/2}(\partial_{\mu}V)\adj(V)^{1/2}), (\adj(V)^{1/2}(\partial_{\nu}V)\adj(V)^{1/2})\ra_{F},
\end{align*}
with $\la\cdot,\cdot\ra_F$ denoting the Frobenius inner product given by $\la A, B\ra_F =\trace\left[\left(B^T A\right)\right]$ for any matrices $A, B$ with same dimension.

Consider the linear operator $T_V:\R^{2N \times 2N} \mapto \R^{2N \times 2N}$, with $\R^{2N \times 2N}$ under the Frobenius inner product, defined by
\begin{align}
T_VA = \adj(V)^{1/2}A\adj(V)^{1/2}.
\end{align}
Then
\begin{align*}
||T_VA||_F  &= ||\adj(V)^{1/2}A\adj(V)^{1/2}||_F \leq ||\adj(V)^{1/2}||\;||A\adj(V)^{1/2}||_{F} 
\\
& \leq ||\adj(V)||\;||A||_F,
\end{align*}
with equality if $A = I$, where we have used the property that $||\adj(V)|| = ||\adj(V)^{1/2}||^2$ by the symmetric, positive definiteness of $\adj(V)$.  Thus 
\begin{align}
||T_V|| = ||\adj(V)||. 
\end{align}
Then we have
\begin{align*}
\tilde{g}_{\mu \nu} = \frac{1}{2}\la T(\partial_{\mu}V), T(\partial_{\nu}V)\ra_F.
\end{align*}
Let $E$ be the $m \times m$ matrix defined by
\begin{align*}
E_{\mu \nu} = \frac{1}{2}\trace[(\partial_{\mu}V)(\partial_{\nu} V)] = \frac{1}{2}\la (\partial_{\mu}V), (\partial_{\nu} V)\ra_F.
\end{align*}
By Lemma \ref{lemma:Gram-inequality}, 
\begin{align*}
\det \tilde{g} \leq ||T_V||^{2m}\det E = ||\adj(V)||^{2m}\det E.
\end{align*}
It follows that
\begin{align*}
\det g \leq \frac{||\adj(V)||^{2m}}{(\det V)^{2m}}\det E = \left(\frac{||\adj(V)||}{\det V}\right)^{2m} \det E= \left(\frac{\lambda_{\max}[\adj(V)]}{\det V}\right)^{2m}\det E.
\end{align*}
Let $\{\lambda_k\}_{k=1}^{2N}$ be the eigenvalues of $V$. From the relation $\adj(V) = \det(V)V^{-1}$, it follows that the eigenvalues of $\adj(V)$ are
$\left\{\frac{\det(V)}{\lambda_k}\right\}_{k=1}^{2N}$ and thus
\begin{align*}
\lambda_{\max}(\adj(V)) = \frac{\det(V)}{\lambda_{\min}(V)} \imply \det g \leq \left(\frac{1}{\lambda_{\min}(V)}\right)^{2m}\det E.
\end{align*}
This completes the proof of the theorem.
\end{proof}

{\begin{proof}[\textbf{Proof of Corollary \ref{corollary:integral-boundenergy}}]
The first expression of the Corollary follows from the bound given in Theorem \ref{theorem:det-bound} and the definition of $\varPhi$. 

We now show that the integral 
\begin{align*}
\int_{\Theta}\varPhi(V)\sqrt{\det g} d\theta,
\end{align*}
is bounded. By the inequality $\log(1+x) \leq x$ for all $x \geq 0$ and the limit $\lim_{x\approach 0}\frac{\log(1+x)}{x} = 1$, we always have 
\begin{align*}\
\frac{\log[1+(\det V)^m]}{(\det V)^m} \leq 1\;\;\; \text{whenever $\det V \geq 0$}. 
\end{align*}
Consider now the factor $H(\mathbf{E}-\tr(V))\lambda_{\max}^m(\adj(V))$.
As in the proof of Theorem \ref{theorem:det-bound}, let $\{\lambda_k\}_{k=1}^{2N}$ be the eigenvalues of $V$, arranged in decreasing order. Then
\begin{align*}
\lambda_{\max}(\adj(V)) = \frac{\det V}{\lambda_{2N}} = \prod_{j=1}^{2N-1}\lambda_j \leq \left(\frac{\sum_{j=1}^{2N-1}\lambda_j}{2N-1}\right)^{2N-1}
\leq \left(\frac{\trace(V)}{2N-1}\right)^{2N-1}\leq\left(\frac{\mathbf{E}}{2N-1}\right)^{2N-1},
\end{align*}
where $\mathbf{E}$ is a suitable \textit{real} positive constant bounding from above $\trace(V)$, which is a positive linear function in $\theta$. Furthermore, the domain of integration is now bounded from above, indeed it is given by $\left\{\theta\in\RR^m\ |\ V(\theta)>0\right\}\cap\left\{\theta\in\RR^m\ |\ \tr (V)\leq \mathbf{E} \right\}$. As sort of evidence, let us consider the following expression for $\det(V)$ \cite{Kondra},
\begin{align}\label{detV}
\det(V)=\sum_{k_1,\ldots,k_N}\prod_{l=1}^N\frac{(-1)^{k_l+1}}{l^{k_l}k_l!}\trace(V^l)^{k_l},
\end{align}
where the sum is taken over the set of all integers $k_l\geq 0$ satisfying the equation $\sum_{l=1}^N lk_l=N$. Now, whenever $V$ is positive definite matrix, through Cauchy-Schwarz inequality we have that $\trace(V^2)\leq\trace(V)^2$; thus, by induction on $n$, with $1\leq n\leq N$, we arrive at $\trace(V^l)\leq \trace(V)^l$ for every $l\leq N$. So, from Eq.\eqref{detV} we have that,
\begin{align}
\label{detVinequality}
0<\det(V)\leq \sum_{k_1,\ldots,k_N}\prod_{l=1}^N\frac{(-1)^{k_l+1}}{l^{k_l}k_l!}\mathbf{E}^{lk_l},
\end{align}
whenever $V$ is positive definite. Moreover, because of trace of a principal minor of $V$ is smaller than $\trace(V)$ then, we earn same bounds as in \eqref{detVinequality} for determinant of every principal minors whenever $V$ is positive definite. 
Thus, we have  the convergence of the integral. 
\end{proof}}


\section{Proof of Proposition \ref{regpro} and Corollary \ref{corollary:integral-bound} }\label{appendixD}

\begin{proof}[\textbf{Proof of Proposition \ref{regpro}}] Consider a permutation matrix $\Pi$ such that $V^\prime=\Pi^T V \Pi$. Then, because of the property of unitary determinant $\det \Pi =\det \Pi^T =1$, and the fact that the permutation matrices $\Pi$ do not depend on the
parameters $\theta^l$, we have
\begin{eqnarray*}
\det V^\prime &=& \det\left(\Pi^T V \Pi\right)=\det \Pi^T\ \det V \ \det \Pi\\
&=& \det V,\\
\mbox{tr}\left[(\det V^\prime)(V^\prime)^{-1} \right]&=& \det V^\prime\ \mbox{tr}\left[(V^\prime)^{-1} \right]\\
&=& \trace\left[(\det V)(V)^{-1} \right].
\end{eqnarray*}

Hence, from Eq. \eqref{reg} it immediately follows that $\Upsilon(V(\theta))=\Upsilon(V^\prime(\theta))$.

\bigskip

In the same way, since $\det S=1=\det S^T$, with $S$  a symplectic matrix  such that $V^\prime=S^T V S$, then we have
\begin{eqnarray*}
\det V^\prime &=&  \det V,\\
\mbox{tr}\left[(\det V^\prime)(V^\prime)^{-1} \right]&=& \mbox{tr}\left[(\det V)(V)^{-1} \right].
\end{eqnarray*}

Hence, from Eq. \eqref{reg} it immediately follows that $\Upsilon(V(\theta))=\Upsilon(V^\prime(\theta))$.  
\end{proof}

\begin{proof}[\textbf{Proof of Corollary \ref{corollary:integral-bound}}]
The first expression of the Corollary follows from the bound given in Theorem \ref{theorem:det-bound} and the definition of $\Upsilon$. 

We now show that the integral 
\begin{align*}
\int_{\Theta}\Upsilon(V)\sqrt{\det g} d\theta
\end{align*}
is bounded. By the inequality $\log(1+x) \leq x$ for all $x \geq 0$ and the limit $\lim_{x\approach 0}\frac{\log(1+x)}{x} = 1$, we always have 
\begin{align*}\
\frac{\log[1+(\det V)^m]}{(\det V)^m} \leq 1\;\;\; \text{whenever $\det V \geq 0$}. 
\end{align*}
Consider now the factor $\exp(-\trace[\adj(V)])\lambda_{\max}^m(\adj(V))$.
As in the proof of Theorem \ref{theorem:det-bound}, let $\{\lambda_k\}_{k=1}^{2N}$ be the eigenvalues of $V$, arranged in decreasing order. From the relation $\adj(V) = \det(V)V^{-1}$, it follows that the eigenvalues of 
$\adj(V)$ are $\left\{\frac{\det(V)}{\lambda_k}\right\}_{k=1}^{2N}$. We have
\begin{align*}
\trace[\adj(V)] &= \sum_{k=1}^{2N}\frac{\det V}{\lambda_k} 
= \sum_{k=1}^{2N}\prod_{j=1, j\neq k}^{2N}\lambda_j,
\end{align*}
which is a positive polynomial in the parameters $(\theta^i)_{i=1}^m$. Furthermore
\begin{align*}
\lambda_{\max}(\adj(V)) = \frac{\det V}{\lambda_{2N}} = \prod_{j=1}^{2N-1}\lambda_j \leq \left(\frac{\sum_{j=1}^{2N-1}\lambda_j}{2N-1}\right)^{2N-1}
\leq \left(\frac{\trace(V)}{2N-1}\right)^{2N-1},
\end{align*}
where $\trace(V)$ is a positive linear function in $\theta$.
Thus as $\theta$ grows, the expression
\begin{align*}
\exp(-\trace[\adj(V)])\lambda_{\max}^m(\adj(V))
\end{align*}
decays exponentially, leading to the convergence of the integral. 
\end{proof}



\end{document}